\def\input@path{{/home/vlecomte/Dropbox/misc/latex/}}
\newcommand{\Jc}{{\overline{J_i}}}
\newcommand{\myw}{{w^*}}
\newcommand{\zo}{\set{0,1}}
\newcommand{\CC}{\mathsf{CC}}
\newcommand{\Ifree}{I_{\mathrm{free}}}
\newcommand{\Icontrol}{I_{\mathrm{control}}}
\newcommand{\Ibuffer}{I_{\mathrm{buffer}}}
\newcommand{\xfree}{x_{\mathrm{free}}}
\newcommand{\xcontrol}{x_{\mathrm{control}}}
\newcommand{\xbuffer}{x_{\mathrm{buffer}}}
\newcommand{\nfree}{n_{\mathrm{free}}}
\begin{document}

\title{The composition complexity of majority
 \vspace{15pt}}

\author{\hspace{-15pt}Victor Lecomte  \and \hspace{-12pt} Prasanna Ramakrishnan\vspace{18pt} \\
 {\sl \small{Stanford University}}
\and \hspace{-10pt} Li-Yang Tan \vspace{8pt} \\ \hspace{-15pt} }

\date{\vspace{15pt}\small{\today}}

\maketitle

\begin{abstract} 
We study the complexity of computing
majority
as a composition of \emph{local} functions: 
\[ \Maj_n = h(g_1,\ldots,g_m), \] 
where each $g_j : \zo^{n} \to \zo$ is an arbitrary function that queries only $k \ll n$ variables and $h : \zo^m \to\zo$ is an arbitrary
combining function.
We prove an optimal lower bound of 
\[ m \ge \Omega\left( \frac{n}{k} \log k \right) \]
on the number of functions needed, which is a factor $\Omega(\log k)$ larger than the ideal $m = n/k$. We call this factor the \emph{composition overhead}; previously, no superconstant lower bounds on it were known for majority.

Our lower bound recovers, as a corollary and via an entirely different proof, the best known lower bound for bounded-width branching programs for majority~\small{(Alon and Maass '86, Babai et al.~'90)}. It is also the first step in a plan that we propose for breaking a longstanding barrier in lower bounds for small-depth boolean circuits.

Novel aspects of our proof include sharp bounds on the information lost as computation flows through the inner functions $g_j$, and the bootstrapping of lower bounds for a multi-output function (Hamming weight) into lower bounds for a single-output one (majority).

\end{abstract}

\newpage

\section{Introduction}

\def\colorful{0}

\ifnum\colorful=1
\newcommand{\violet}[1]{{\color{violet}{#1}}}
\newcommand{\orange}[1]{{\color{orange}{#1}}}
\newcommand{\blue}[1]{{{\color{blue}#1}}}
\newcommand{\red}[1]{{\color{red} {#1}}}
\newcommand{\green}[1]{{\color{green} {#1}}}
\newcommand{\pink}[1]{{\color{pink}{#1}}}
\newcommand{\gray}[1]{{\color{gray}{#1}}}

\fi
\ifnum\colorful=0
\newcommand{\violet}[1]{{{#1}}}
\newcommand{\orange}[1]{{{#1}}}
\newcommand{\blue}[1]{{{#1}}}
\newcommand{\red}[1]{{{#1}}}
\newcommand{\green}[1]{{{#1}}}
\newcommand{\gray}[1]{{{#1}}}

\fi

\def\showgray{0}

\renewcommand{\gray}[1]{\vspace{2\baselineskip}}

A basic theme in computer science is the representation of certain functions as the combination of simpler ones.
Indeed, the field of distributed computing and the widespread principle of divide-and-conquer rely on this property of functions.

In this paper we focus on \emph{locality} as our notion of simplicity: a $k$-local function over $n$ variables is one that depends only on $k \ll n$ input coordinates.
This leads us to the following complexity measure of boolean functions, first
studied
by Hrube\v{s} and Rao~\cite{HR15}, which quantifies how easily they can be represented as the combination of local functions:

\begin{definition}[composition complexity]
\label{def:complexity}
The \emph{$k$-composition complexity} of a function $f$, denoted $\CC_k(f)$, is the minimum $m$ such that $f$ can be expressed as $h(g_1, \ldots, g_m)$, where each of the inner functions
$g_j$
queries only $k$ variables.
\end{definition}

Clearly, any function $f$ that actually depends on all $n$ variables must have $\CC_k(f) \geq n/k$, since every variable must be queried at least once.
The parity function shows that this bound can be tight: $\CC_k(\Parity_n) \leq O(n/k)$, since we can let each $g_j$ compute the parity of a set of
$k$ variables, and let $h$ compute the parity of these parities.
However, some functions inherently require more than $n/k$ inner functions: they incur a \emph{composition overhead}.  This motivates defining the \emph{$k$-composition overhead} of  $f$ to be the ratio $\frac{\CC_k(f)}{n/k}$ between its $k$-composition complexity and the ideal $n/k$.

\subsection{This work}
In this paper we study the composition complexity of the majority function.
Prior to our work, this was perhaps the most basic function whose composition complexity was not well understood.
For an upper bound, it is not hard to see that $\CC_k(\Majority_n) \leq O(\frac{n}{k} \log k)$.
We can split the variables into $n/k$ disjoint sets of size $k$ and devote $O(\log k)$ of the inner functions $g_i$ to computing the Hamming weight of each set.
Then $h$ can determine the overall Hamming weight, and output 1 if and only if it is at least $n/2$.
This uses a total of $m \leq O(\frac{n}{k} \log k)$ inner functions.

As for lower bounds, while it is an easy exercise to improve the trivial lower bound of $\ge n/k$
to $> n/k$
in the case of majority,
even a modest lower bound of $\ge 1.1 \frac{n}{k}$ seems challenging to establish.
Our main result is an asymptotically tight lower bound showing that the construction described above is optimal, and that majority has a composition overhead of $\Theta(\log k)$.

\begin{theorem}[Main theorem]\label{thm:main}
For all $\eps > 0$ and $k \leq n^{1 - \eps}$, $\CC_k(\Majority_n) \geq \Omega(\frac{n}{k} \log k)$.
\end{theorem}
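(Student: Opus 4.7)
The plan is to prove the lower bound in two stages, mirroring the two novel aspects highlighted in the abstract.

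\textbf{Stage 1.} I would first establish an $\Omega((n/k)\log k)$ lower bound for the multi-output Hamming weight function $\mathrm{HW}_n : \zo^n \to \{0,1,\ldots,n\}$. Partition the variables into $n/k$ blocks $B_1,\ldots,B_{n/k}$ of size $k$, and let $W_\ell = \mathrm{HW}(X_{B_\ell})$ denote the block weights, which under uniform $X$ are i.i.d.\ $\mathrm{Binomial}(k,1/2)$ of entropy $\tfrac12\log k + O(1)$ each. Since a composition $Y=(g_1(X),\ldots,g_m(X))$ computing $\mathrm{HW}(X)=\sum_\ell W_\ell$ also determines each $W_\ell$ upon conditioning on $X_{\bar\ell}$, we get
\[
  \sum_\ell I(Y;W_\ell \mid X_{\bar\ell}) \;=\; \sum_\ell H(W_\ell) \;=\; \Omega\!\left(\tfrac{n}{k}\log k\right).
\]
The technical heart of Stage 1---the ``sharp information bound''---is to match this with $\sum_\ell I(Y;W_\ell \mid X_{\bar\ell}) \leq O(m)$. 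Naive entropic subadditivity $I(Y;W_\ell\mid X_{\bar\ell})\leq \sum_{j\in J_\ell} H(g_j\mid X_{\bar\ell})$ loses a factor of $k$, because a XOR-like $g_j$ straddling several blocks can contribute a full bit of conditional entropy to each. Instead I would show the per-$g_j$ bound $\sum_\ell I(g_j; W_\ell \mid X_{\bar\ell}) = O(1)$, using that $g_j$'s contribution to block $\ell$ is at most $I(X_{S_j\cap B_\ell}; W_\ell)=O(s_{j,\ell}/k)$ with $s_{j,\ell}:=|S_j\cap B_\ell|$ (by a $\mathrm{Binomial}(k,1/2)$ entropy calculation), combined with $\sum_\ell s_{j,\ell}=k$.

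\textbf{Stage 2.} I would then bootstrap from $\mathrm{HW}_n$ to $\Maj_n$. The naive reduction recovers each of the $\Theta(\log n)$ bits of $\mathrm{HW}(X)$ via a separate threshold query to $\Maj$ (realized by padding the input to shift the threshold), giving $\CC_k(\mathrm{HW}_n)\leq O(\log n\cdot\CC_k(\Maj_n))$ and losing a $\log n$ factor. The loss-free bootstrap I would aim for instead observes that a single tuple $(g_1(X),\ldots,g_m(X))$ from a $\Maj_n$ composition can be re-combined by different outer functions (and via different restrictions of the variables outside a target block) to answer many threshold queries simultaneously, so that the bootstrap incurs only $O(1)$ multiplicative overhead and yields $\CC_k(\Maj_n) \geq \Omega(\CC_k(\mathrm{HW}_n))$.

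\textbf{Main obstacle.} The difficulty lies in both stages. Stage 1 requires passing from the per-$g_j$ bound $\sum_\ell I(g_j;W_\ell\mid X_{\bar\ell}) = O(1)$ to a joint bound on $\sum_\ell I(Y;W_\ell\mid X_{\bar\ell})$, which is not automatic since mutual information is not subadditive in general; one likely needs to exploit a chain-rule decomposition or the specific block structure to close this gap. Stage 2 requires designing a bootstrap that does not duplicate inner functions across $(\ell, t)$ queries, in order to avoid the $\log n$ overhead of the obvious threshold-query reduction---the very loss that would otherwise weaken the final bound from $\Omega((n/k)\log k)$ to $\Omega((n/k)\log k/\log n)$.
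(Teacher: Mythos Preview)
Your two-stage plan matches the paper's structure, but Stage~1 takes a different route that has a genuine gap. The paper does \emph{not} use a block decomposition. Its key lemma is per-variable: if $x_i$ is queried by at most $q$ inner functions, then $I(X_i; g_1(X),\ldots,g_m(X)) \geq 2^{-O(q)}$, proved by a combinatorial argument (fixing the outputs of the $g_j$'s that do not query $x_i$ leaves at most $2^q$ possible Hamming weights, among which one can locate a weight where $X_i$ is noticeably biased). Summing over $i$ gives $m \geq n\cdot 2^{-O(q)}$, and combined with $q \leq mk/n$ (after a self-containment reduction making the query count uniform) this yields $q \geq \Omega(\log k)$. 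Your block approach instead needs $\sum_\ell I(Y;W_\ell \mid X_{\bar\ell}) \leq O(m)$, and the aggregation from per-$g_j$ bounds to this joint bound is not merely ``not automatic''---it is false for general $k$-local $g_j$'s. Take $k=\sqrt{n}$, $m=\sqrt{n}$, and let each $g_j$ be the XOR of one designated bit from every block. Conditioned on $X_{\bar\ell}$, the tuple $Y$ then reveals $X_{B_\ell}$ entirely, so $\sum_\ell I(Y;W_\ell\mid X_{\bar\ell}) = (n/k)\cdot\Theta(\log k) = \Theta(m\log k) \gg m$, even though your per-$g_j$ bound $\sum_\ell I(g_j;W_\ell\mid X_{\bar\ell})=O(1)$ holds in this example. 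This $Y$ does not compute $\mathrm{HW}_n$, so the example does not contradict the theorem; but it shows the upper bound cannot be derived from locality alone, which is all your Stage-1 upper-bound argument invokes. The natural chain-rule repair (condition on $W_{<\ell}$ instead of $X_{\bar\ell}$) restores the upper bound as $\sum_\ell I(Y;W_\ell\mid W_{<\ell}) = I(Y;W_1,\ldots,W_{n/k}) \leq m$, but destroys the lower bound, since under that conditioning $Y$ determines only $\sum_\ell W_\ell$ and not $W_\ell$ individually.

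For Stage~2, the paper's bootstrap is also different from what you sketch. It designates \emph{control} variables to shift the threshold and \emph{buffer} variables chosen so that no $g_j$ queries both a control and a free variable; the buffer is precisely what allows the same inner-function outputs to be reused across all threshold settings without duplication, and your proposal does not identify this mechanism. Because the number of control variables one can isolate this way is limited, the reduction recovers the Hamming weight of the free variables only on a band of width $2^{\Theta(q)}$ around the median, so the paper also proves and applies a \emph{partial-function} version of its per-variable key lemma rather than treating $\mathrm{HW}$ as a black box.
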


In addition to being a natural complexity measure that is of independent interest, our study of composition complexity is further motivated by its relationships to two important models of computation: bounded-width branching programs and small-depth circuits.
These are two of the most intensively studied models in circuit complexity, and majority plays a starring role in the efforts at lower bounds for both models.
As we now elaborate,~\Cref{thm:main} recovers, as a corollary and via an entirely different proof, the current best lower bounds on the length of bounded-width branching programs for majority \cite{AM86,BPRS90}.
It is also the first step in a plan that we propose for lower bounds against depth-$3$ circuits computing majority, a long-standing open problem that dates back to the 1990s \cite{HJP93}; such lower bounds would represent the first improvements over the state of the art for depth-$3$ circuits in over three decades \cite{Has86}.

\subsection{Motivation and implications}\label{sec:motives}

\subsubsection{Bounded-width branching programs}
There is an easy reduction from branching programs to our model: if function $f$ is computed by a bounded-width branching program of length $L$,
then $\CC_k(f) \leq O(L/k)$.
The reduction works by cutting the branching programs into $\sim L/k$ segments of length at most $k$, then for each segment and for each state $s$ at the start of the segment, use
$O(1)$
inner functions to compute which state one would end up at the end of the segment if one started from $s$.
Since the segments have length at most $k$, each of the inner functions depends on at most $k$ variables.

This means that nontrivial lower bounds on the composition complexity $\CC_k(f)$ for any $k$ directly imply nontrivial lower bounds on the length of bounded-width branching programs: $L \geq \Omega(k \cdot \CC_k(f))$.
In particular, setting $k=\sqrt{n}$, by \Cref{thm:main} we obtain that bounded-width branching programs computing the majority function must have length $L \geq \Omega(\sqrt{n} \cdot \frac{n}{\sqrt{n}} \log \sqrt{n}) = \Omega(n \log n)$, recovering the classic lower bound of Alon and Maass~\cite{AM86} and Babai, Pudl{\'a}k, R{\"o}dl, and Szemer{\'e}di~\cite{BPRS90} as a corollary: 

\begin{theorem}[\cite{AM86,BPRS90}]
Any bounded-width branching program computing $\Maj_n$ must have length $\Omega(n \log n)$.
\end{theorem}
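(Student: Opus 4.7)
The plan is to derive this as a direct corollary of \Cref{thm:main} by combining it with the reduction from bounded-width branching programs to the composition model sketched earlier in \Cref{sec:motives}. There is no new lower bound argument needed here; the content is in verifying the reduction and picking the best value of $k$.

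First I would formalize the reduction. Given a width-$w$ branching program of length $L$ computing a function $f$, partition its layers into $\lceil L/k \rceil$ consecutive segments of at most $k$ layers each. Since each layer queries a single variable, each segment depends on at most $k$ input coordinates. For a fixed segment and a fixed starting state $s \in [w]$, the ending state is determined by the values of these $\leq k$ variables, and since $w = O(1)$ the entire transition table (for all $w$ starting states) can be encoded by $O(w \log w) = O(1)$ bits. I would thus use $O(1)$ inner $k$-local functions $g_j$ per segment to output these bits, for a total of $m = O(L/k)$ inner functions. The outer function $h$ simulates the branching program by composing the transitions produced by successive segments and reading off the accept/reject bit at the end; $h$ does not query any input variables, so it is an arbitrary boolean function of the $g_j$'s, as required. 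This shows $\CC_k(f) \leq O(L/k)$ for every $k$ and every $f$ computed by such a program.

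Next I would instantiate \Cref{thm:main} at a convenient scale. Choosing $k = \lfloor \sqrt{n} \rfloor$ (which satisfies the hypothesis $k \leq n^{1-\eps}$ with $\eps = 1/2$), the theorem gives
\[ \CC_k(\Maj_n) \geq \Omega\!\left(\frac{n}{k}\log k\right) = \Omega(\sqrt{n}\,\log n). \]
Combining with the reduction, if a bounded-width branching program of length $L$ computes $\Maj_n$, then
\[ O(L/k) \;\geq\; \CC_k(\Maj_n) \;\geq\; \Omega(\sqrt{n}\,\log n), \]
so $L \geq \Omega(k \cdot \sqrt{n} \log n) = \Omega(n \log n)$.

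The main obstacle in this argument is not in the corollary itself but in \Cref{thm:main}, whose proof this paper develops; once that lower bound is in hand, the only step requiring care is the reduction, where one must check that constant width lets the per-segment transition table be encoded with $O(1)$ inner functions and that the outer combining function is unrestricted (hence free to execute the layer-by-layer simulation). The choice $k = \sqrt{n}$ is essentially forced: at this value both inequalities $L \geq k \cdot \CC_k$ and $\CC_k \geq (n/k)\log k$ contribute a $\sqrt{n}\log n$ factor to $L$, so any other choice of $k \le n^{1-\eps}$ only yields $L \geq \Omega(n \log k) = \Omega(n \log n)$ as well, confirming that $\Omega(n\log n)$ is the best bound obtainable by this black-box route.
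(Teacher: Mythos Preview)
Your proposal is correct and follows essentially the same route as the paper: you spell out the segment-by-segment reduction $\CC_k(f)\le O(L/k)$ for bounded-width branching programs and then plug in $k=\sqrt{n}$ together with \Cref{thm:main}, exactly as the paper does. The only difference is cosmetic detail in the reduction (you make the $O(w\log w)=O(1)$ encoding explicit), not in substance.
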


This lower bound remains the current state of the art. Interestingly, our techniques are completely different from the techniques used in \cite{AM86} and \cite{BPRS90}: they first prove a Ramsey-theoretic lemma that identifies two sets of variables that are queried in disjoint segments of the branching program, then conclude by a communication complexity argument between them. 
We elaborate on our techniques in~\Cref{sec:techniques}.

\subsubsection{Lower bounds for small-depth circuits computing majority}
\label{sssec:link-to-small-depth}

We view \Cref{thm:main} in part as an essential and necessary first step towards proving stronger lower bounds against small-depth circuits. In this section, we will outline why such lower bounds are particularly interesting, and the relationship between our composition complexity lower bounds and circuit lower bounds.

\paragraph*{\blue{Lower bounds for small-depth circuits}}

A fruitful line of work from the '80s~\cite{FSS81,Ajt83,Yao85,Has86} managed to prove strong lower bounds against \blue{small-depth} circuits, but using a surprisingly simple function: parity. In particular, the result that culminated from these works was that any depth-$d$ circuit computing $\Parity_n$ must have size $2^{\Omega\big(n^{\frac{1}{d-1}}\big)}$, which is superpolynomial for any $d = o(\log n/\log\log n)$. These bounds are optimal for parity---an extension of the divide-and-conquer scheme mentioned earlier matches this bound. As we now elaborate, these remain essentially  the strongest (explicit) lower bounds against small-depth circuits we have for any function, even in the case of $d=3$, despite the fact that counting arguments give lower bounds of $\Omega(2^n/n)$, and we expect that circuits for hard functions like $\SAT$ must also have size $2^{\Omega(n)}$. 

\paragraph*{Depth-3 circuits and majority.} The problem of improving the state of the art for depth-3 circuits $(2^{\Omega(\sqrt{n})}$ for $\Parity_n$) in particular has received significant attention as one of the simplest restricted models that are poorly understood. Stronger depth-3 bounds are likely to imply stronger small-depth bounds in general, and \emph{much} stronger bounds of the form $2^{\omega(n/\log\log n)}$ would also give functions that cannot be computed by linear-size log-depth circuits due to a classical result of Valiant \cite{Val83}. %
For a detailed exposition of the  motivations for depth-3 and the attempts to understand the model, see \cite[Chapter~11]{Juk12}.

Of the functions that could prove stronger depth-3 lower bounds, majority, being such a basic and simple-to-understand function, is a particularly enticing candidate.  The divide-and conquer construction analogous to that for parity gives depth-3 circuits of size $2^{O(\sqrt{n\log n})}$, and the same $2^{\Omega(\sqrt{n})}$ lower bound for parity also applies to majority.\footnote{For general depth $d$, the upper bound is $2^{O\big(n^{\frac{1}{d-1}}\cdot (\log n)^{1-\frac{1}{d-1}}\big)}$ \cite{KPPY84} and the lower bound is $2^{\Omega\big(n^{\frac{1}{d-1}}\big)}$ once again.} In light of this gap, H{\aa}stad, Jukna, and Pudl{\'a}k~\cite{HJP93} posed the following challenge.
 
\begin{openproblem}[\cite{HJP93}]\label{op:depth3lbs}
Find an explicit function that requires depth-3 circuits of size $2^{\omega(\sqrt{n})}$.
In particular, does $\Majority_n$ require depth-3 circuits of that size?
\end{openproblem}

Despite this natural candidate function, all of the improvements on the lower bounds for \mbox{depth-3} circuits have still been of the form $2^{c\sqrt{n}}$ for successively larger $c$. \cite{HJP93} found an innovative method of proving lower bounds for circuits from the ``top down'', and were able to get constants $c = 0.618$ and $0.849$ for parity and majority respectively.
Paturi, Pudl{\'a}k, and Zane~\cite{PPZ97} improved the constant to the optimal $c=1$ for parity, and later
the same authors along with Saks~\cite{PPSZ05}
obtained the constant
$c = 1.282$ for the membership function of an error-correcting code.
Getting a super-constant improvement over this state of the art has been a major frontier of circuit complexity for decades. 

For majority, there have also been attempts at improved \emph{upper bounds}. \cite{Wol06} proposed a probabilistic construction of a depth-3 circuit computing $\Majority_n$ with size $2^{O(\sqrt{n})}$, but the construction turned out to be mistaken.\footnote{Briefly, in the notation of \cite{Wol06}, it requires that $kn/c < n$, and so $k < c$, but $c$ then takes on all the values of $n, n/2, n/3, \ldots, 1$ and $k \approx \sqrt{n}$. The third author thanks Srikanth Srinivasan \cite{Sri15} who informed him about this gap in the proof.}

\paragraph*{The need for new techniques.}  
One view of why previous techniques have fallen short on resolving \Cref{op:depth3lbs} is that at their core, they use sensitivity as the key complexity measure  for which small depth circuits are weak (see~\cite{Bop97,Ros18,MW19}). With respect to sensitivity, of course $\Parity_n$ is the most complex function because it has sensitivity $n$ at every input.
The fact that parity is the hardest should suggest why we have struggled to get lower bounds stronger than those for parity.
More concretely, these techniques do not establish bounds stronger than $2^{\Omega(s(f)^{\frac{1}{d -1}})}$  where $s(f)$ is the sensitivity of $f:\zo^n\to\zo$. But $s(f)\leq n$, so this leaves us stuck the current state of the art ($2^{\Omega(\sqrt{n})}$ for depth-3).

To push beyond our current small-depth circuit lower bounds, there is a need for new techniques.
In particular, we need to make use of complexity measures beyond sensitivity, where parity is no longer the hardest function.
Moreover, to solve \Cref{op:depth3lbs}, such a complexity measure needs to be one where majority is demonstrably harder than parity.
The notion of composition complexity and the techniques of this paper meet both of these demands (as shown by \Cref{thm:main}).
We are hopeful that these techniques (described in \Cref{sec:techniques}) can be extended to prove stronger lower bounds in more general settings.

\paragraph*{Composition complexity and depth-3 circuits.} If $\CC_k(f) \leq m$, then we can write $f=  h(g_1, \ldots, g_m)$ where each of the functions $g_j$ only needs to query $k$ variables.
But then  we can write $h$ as a DNF (or CNF) of size $2^m$, and we can write the inner functions $g_j$ as CNFs (or DNFs) of size $2^k$.
In this form we have a depth-3 circuit for $f$ of size
\begin{equation}
\label{eq:cc-to-depth3}
2^m + m\cdot 2^k
\leq 
2^{O(m + k)}
\end{equation}
with bottom fan-in $k$.
The best-known depth-3 circuits for computing $\Maj_n$ are obtained in precisely this manner,
by using the bound
$\CC_k(\Maj_n) \leq O(\frac{n}{k} \log k)$ and setting $k \coloneqq \sqrt{n\log n}$.

It follows that in order to prove that $\Majority_n$ requires depth-3 circuits of size $2^{\Omega(\sqrt{n \log n})}$, one must first prove that $\max(\CC_k(f), k) \geq \Omega(\sqrt{n \log n})$.
This is implied directly by \Cref{thm:main}, which we view as a key \orange{first} step towards proving stronger depth-3 lower bounds. In particular, it shows that if one wanted to construct a depth-3 circuit for $\Majority_n$ of size $2^{o(\sqrt{n \log n})}$, it would have to look very different from the current divide-and-conquer strategy.
In \Cref{subsec:depth3-plan} we outline some ways in which our results could be extended beyond the model we consider to depth-3.

\subsection{Our techniques}\label{sec:techniques}

In this section, we briefly describe the techniques we use to prove \Cref{thm:main}, and highlight some aspects that we feel are particularly interesting.

\paragraph*{Information theory.}
While there have been some attempts to incorporate information theory into the toolbox of boolean function lower bounds~\cite{NW95,GMWW14}, these techniques remain uncommon.
Our proof crucially uses
\emph{mutual information} to measure information flow from the input variables to the inner functions.

Our key insight can be summed up as the counterintuitive maxim ``the less it is queried the more it is revealed''.
More concretely, suppose we can compute the Hamming weight function as $h(g_1, \ldots ,g_m)$, then we show that
if few of the inner functions $g_1, \ldots, g_m$ query a particular input variable $x_i$, then the output of the inner functions must reveal a lot of information about $x_i$'s value---that is, one can guess $x_i$ better than random chance based on $g_1(x), \ldots, g_m(x)$.

\begin{lemma}[key insight, informally]
\label{lemma:insight}
Suppose that $x_i$ is queried by at most $q$ of the inner functions $g_1, \ldots, g_m$.
Then $\I\sqb{\bX_i:g_1(\bX), \ldots, g_m(\bX)} \geq 2^{-O(q)}$.
\end{lemma}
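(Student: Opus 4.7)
My plan is to prove this via an information-theoretic argument, working under the Hamming weight setting implicit in the surrounding text: assume $h(g_1, \ldots, g_m)$ computes the Hamming weight $W$ of $X \sim \mathrm{Unif}(\zo^n)$. Let $S \subseteq [m]$ with $|S| \leq q$ be the set of inner functions that query $x_i$, and decompose $G := (g_1(X), \ldots, g_m(X))$ as $(Y, Z)$, where $Y := (g_j(X))_{j \notin S}$ and $Z := (g_j(X))_{j \in S}$. Since $Y$ depends only on $X_{-i}$, we have $X_i \perp Y$, so $\I\sqb{X_i : G} = \I\sqb{X_i : Z \mid Y}$.

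The main structural ingredient is a bound on the support of $W_{-i} := W - X_i$ conditional on $Y$. Since $h(G) = W$ by assumption, fixing $Y = y$ makes $W$ a function of $Z$, which takes at most $2^q$ values. Hence $W_{-i} \mid Y = y$ is supported on at most $N := 2^{q+1}$ points. Any probability distribution on $\mathbb{Z}$ whose support has size $\leq N$ must climb from $0$ to a peak of height $\geq 1/N$ and descend back to $0$, so its discrete variation $\sum_w |p(w) - p(w-1)|$ is at least $2/N = 2^{-q}$.

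Next I convert this into a guessing advantage for the optimal estimator $\hat X_i(G) := \arg\max_{x_i} \Pr\sqb{X_i = x_i \mid G}$. Writing $q_b(z \mid y) := \Pr\sqb{Z = z \mid Y = y, X_i = b}$ for $b \in \zo$, a direct calculation using $X_i \sim \mathrm{Ber}(1/2)$ gives
\[
\Pr\sqb{\hat X_i = X_i \mid Y = y} = \tfrac{1}{2} + \tfrac{1}{4} \sum_z \bigl|q_0(z \mid y) - q_1(z \mid y)\bigr|.
\]
The key observation is that grouping the $z$-outcomes by $w := h(y,z)$ makes $\sum_{z \,:\, h(y,z)=w} q_b(z \mid y)$ equal $p_y(w - b)$, where $p_y(w) := \Pr\sqb{W_{-i} = w \mid Y = y}$. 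The triangle inequality then gives $\sum_z |q_0 - q_1| \geq \sum_w |p_y(w) - p_y(w - 1)|$, which by the previous paragraph is at least $2^{-q}$. Averaging over $y$ yields the advantage $\delta := \Pr\sqb{\hat X_i = X_i} - \tfrac{1}{2} \geq 2^{-(q+2)}$.

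Finally, the guessing advantage $\delta$ converts into the desired mutual-information lower bound via Fano (equivalently, via the expansion $1 - H_2(\tfrac12 - \delta) = \Omega(\delta^2)$), yielding $\I\sqb{X_i : G} \geq \Omega(\delta^2) = 2^{-O(q)}$. The main obstacle I foresee is the alignment in the third step --- matching each $Z$-outcome to the correct shift of $W_{-i}$ via $h$ --- and the crucial role of the Hamming weight assumption: for $\Parity_n$, the conditional distribution $p_y$ could be supported on arbitrarily many values and the total-variation bound would collapse, consistent with the absence of a $\log k$ overhead for parity.
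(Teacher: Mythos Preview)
Your proof is correct and shares the paper's core structural insight: condition on the outputs $Y$ of the inner functions that do \emph{not} query $x_i$, observe that $W_{-i} = |X| - X_i$ then has support of size at most $2^{q+1}$ given $Y=y$, and exploit the ``shift-by-one'' relation $W = W_{-i} + X_i$ to extract a bias on $X_i$. Where you diverge is in how that bias is extracted. The paper (Section~\ref{sec:proof-main}) locates a single ``jump'' weight $\myw$ with $p_{\myw} - p_{\myw-1} \geq 2^{-2q-3}$, shows that this weight carries mass $\geq 2^{-O(q)}$ and that $\Pr[X_i = 1 \mid |X| = \myw]$ is bounded away from $1/2$ there, and then averages. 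You instead sum the discrete total variation $\sum_w |p_y(w) - p_y(w-1)| \geq 2 \max_w p_y(w) \geq 2^{-q}$ globally, feed it through the optimal-estimator identity $\Pr[\hat X_i = X_i \mid Y=y] = \tfrac12 + \tfrac14 \|q_0 - q_1\|_1$, and finish with Fano. Your packaging is cleaner and avoids the somewhat delicate case analysis around the specific $\myw$; the paper's version, on the other hand, is written to extend directly to the partial-function setting (\Cref{lemma:main-partial}), where $D \subsetneq \zo^n$ introduces boundary terms that are easier to localize at a single weight than to thread through a global $L^1$ bound.
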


One of the novelties of our proof is that the information flow can be distilled so cleanly as the above lemma, and that tight lower bounds follow quite easily from it (see \Cref{sec:hw-lb}).
It is natural to wonder whether this approach can be generalized to stronger models of computation.

\paragraph*{From multi-output functions to binary-output functions.}
Another intriguing aspect of our proof is that in
 proving the lower bound for majority,
it turns out to be easiest to first prove a lower bound for the Hamming weight function $\HW_n: \zo^n\to \set{0, \ldots, n}$, which maps $x \mapsto |x|$ (i.e., the number input bits that are 1).
Notably, this is not a binary-output function, but rather a ``multi-output'' function (its output takes $\ceil{\log (n+1)}$ bits), and the proof of our information-theoretic \Cref{lemma:insight} fundamentally uses this larger output space.

The way in which we extend the lower bound from $\HW_n$ to $\Maj_n$ is also worth noting.
In \Cref{sec:hw-to-maj}, we give a general framework for, in a sense, forcing binary-output functions to become multi-output.
We use ``control variables'' to manipulate the construction $h(g_1, \ldots, g_m)$ into telling us more about the input, and ``buffer variables'' to avoid incurring a blowup in how many inner functions are necessary.
This allows us to show that an efficient construction for $\Maj_n$ would imply a similarly efficient construction computing the Hamming weight on a large fraction of inputs, and we can then conclude with a slightly more general version of \Cref{lemma:insight}.

Perhaps there is more to be found in this direction.
Could the frontier of circuit lower bounds be pushed further by first proving lower bounds for multi-output functions, and then bootstrapping these to get lower bounds against the usual single-output functions?
Indeed, our proof technique suggests that proving more lower bounds for multi-output functions could be a valuable endeavor, even when those lower bounds do not seem to immediately lead to lower bounds for binary-output functions we traditionally study.

\paragraph*{Natural proofs.}
The proof of our key lemma in \Cref{sec:proof-main} is tailored specifically to the Hamming weight function.\footnote{At a high level, it uses the facts that the possible Hamming weight values $0, \ldots, n$ are a completely ordered set
and that flipping a bit from $0$ to $1$ increases the Hamming weight by one,
in order to find one weight $\myw$ for which the corresponding inputs are biased on a given coordinate.}
While this could be seen as a limitation, it can also be seen as a strength.
Indeed, Razborov and Rudich \cite{RR97} showed that lower bound arguments
cannot apply to too broad a range of functions, assuming that pseudorandom functions exist.
Given that our lower bound argument only works for the Hamming weight and closely related functions like the majority function, it does not seem to fall within the natural proofs framework.

\subsection{Related work}\label{sec:related}

\paragraph*{Hrube\v{s}-Rao and Nechiporuk's method.}
Hrubeš and Rao~\cite{HR15} gave a function $f$ for which
$\CC_k(f) \geq n^{\Omega(1 - k/n)}$.\footnote{Their paper denotes $k$-composition complexity as $C^2_k$ instead of $\CC_k$.}
Their proof draws inspiration from Nechiporuk's method \cite{Nec66},
which gives lower bounds for functions $f$ for which one
can split the variables into disjoint sets $S_1, \ldots, S_\ell$ 
such that $f$ has many distinct subfunctions on each $S_r$ ($r \in [\ell]$).

Their lower bound is quantitatively stronger than
ours,
but just like other bounds obtained from
Nechiporuk's method,
it only applies
to a limited set of functions specially created for that purpose,
and says nothing about many basic functions like majority.
Nechiporuk's method is unable to prove lower bounds for majority because its subfunctions are all threshold functions, of which there are only a handful.\footnote{Indeed, there is an exactly analogous situation with branching programs. While Nechiporuk's method can establish strong branching program lower bounds (see \cite[Theorem~2]{Raz91}, attributed to Beame and Cook), \cite{AM86,BPRS90} had to introduce new techniques to prove lower bounds for majority.} Moreover, the functions that \cite{HR15} uses have no bearing on stronger lower bounds for depth-3 circuits, since their functions actually have depth-2 circuits of size $n^{O(\log n)}$.

It is interesting to contrast our techniques with Nechiporuk's method. At a very high level, Nechiporuk's method considers what one can infer about the function after fixing several variables (particularly, how many subfunctions remain), whereas our method considers what one can infer about the variables given the output of several of the inner functions. This difference in perspective is key in our ability to get tight lower bounds in a case where Nechiporuk's method only gives trivial bounds.

\paragraph*{Lower bounds for canonical boolean circuits.} Goldreich and Wigderson \cite{GW20} recently introduced a new restricted of model of ``canonical'' boolean circuits, with the hope of proving $2^{\omega(\sqrt{n})}$ lower bounds for this model. Their model is inspired by the structure of optimal depth-3 circuits for $\Parity_n$, which dissects the computation into disjoint parities of smaller arity (over $\sqrt{n}$ variables). \cite{GW20} proposes a generalization of this construction, where one aims to represent a multi-linear function as a depth-2 circuit where the gates are \emph{arbitrary multi-linear functions of small arity}. They propose proving strong lower bounds in this model as a ``sanity check'' for proving better general depth-3 circuit bounds (\Cref{op:depth3lbs})---to do the latter, one must necessarily do the former as well.

Our approach can be viewed in a very similar light.
In fact, our model is strictly stronger:
we consider gates that compute arbitrary \emph{boolean functions} of small arity, not just functions that are multi-linear over $\mathrm{GF}(2)$.
In the same way, our model serves as a sanity check for \Cref{op:depth3lbs}---any proof that depth-3 circuits for $\Maj_n$ require size $2^{\Omega(\sqrt{n\log n})}$ must prove \Cref{thm:main} as well.

Strong lower bounds have indeed been proven in the model introduced by \cite{GW20}.  Goldreich and Tal~\cite{GT18} proved a lower bound of $2^{\tilde{\Omega}(n^{2/3})}$.~\cite{GT20} also proved lower bounds of $2^{\tilde{\Omega}(n^{3/8})}$ for canonical depth-4 circuits, an improvement over the $2^{\Omega(n^{1/3})}$ bound that is known for the general depth-4 circuits computing $\Parity_n$.  In our model where the gates can compute arbitrary boolean functions of small arity, such strong bounds are not possible for the $\Maj_n$ function, and we pin down exactly the right bounds in this case.

\paragraph*{Majority as a majority of majorities.} There has been interest \cite{KP19,EGMR20,HNRRY19,KM18,Pos17} in the optimal ways of computing majority as a composition of functions $h(g_1, \ldots, g_m)$ in the restricted model, where the $h$ and $g_1, \ldots, g_m$ are majority functions of smaller fan-in ($\Maj_{\leq k}$ for some $k$). \cite{KP19} showed that $k\geq n^{0.7}$ was necessary, \cite{EGMR20} improved this to $k \geq n^{0.8}$, and \cite{HNRRY19} further improved this to $k \geq n/2 - o(n)$. In terms of upper bounds, \cite{KM18} gave $k \leq n+2$ for odd $n \geq 7$, and \cite{Pos17} improved this to $k \leq \frac{2}{3}n+4$ (for all $n$).

Similar to the state of canonical boolean circuits discussed above, this setting is more restrictive, so it makes sense that their lower bounds are stronger than \Cref{thm:main}. 
It would be interesting to see if our techniques could apply to these more restricted models as well.

\section{Preliminaries}

\subsection{Locality}
In this subsection, we define some notation that we will use repeatedly, and we give a formal definition of $k$-locality.

\begin{definition}[$x^{(i \mapsto b)}$, $x^{\oplus i}$]
For any $x \in \zo^n$, $i \in [n]$ and $b \in \zo$, let
\begin{itemize}
    \item $x^{(i \mapsto b)} \coloneqq (x_1, \ldots, x_{i-1}, b, x_{i+1}, \ldots, x_n)$;
    \item $x^{\oplus i} \coloneqq (x_1, \ldots, x_{i-1}, 1-x_i, x_{i+1}, \ldots, x_n)$.
\end{itemize}
\end{definition}

\begin{definition}[$k$-local]
A function $g: \zo^n \to \zo$ is $k$-local if there is a set of variables $I \subseteq [n]$ with $|I|=k$ such that $g$ depends only on the variables in $I$, i.e. for each $x\in \zo^n$ and each $i \in [n] \setminus I$, we have $g(x) = g(x^{\oplus i})$.
\end{definition}

\begin{definition}[composition complexity, formal version of \Cref{def:complexity}]
The \emph{$k$-composition complexity} of a function $f: \zo^n \to D$, denoted $\CC_k(f)$, is the minimum integer $m$ such that there exist functions $g_1, \ldots, g_m : \zo^n \to \zo$ and $h:\zo^m \to D$
with the following properties:
\begin{enumerate}[(i)]
    \item for all $x \in \zo^n$, $f(x) = h(g_1(x), \ldots, g_m(x))$;
    \item for each $j \in [m]$, $g_j$ is $k$-local.
\end{enumerate}
\end{definition}

\begin{remark}
Note that the inner functions $g_j$ are restricted to having binary outputs.
This is necessary for making the definition nontrivial: if their output domains were arbitrary, then the composition complexity would always be $O(n/k)$,
since we could simply let each inner function output the values of all of the variables they query.
\end{remark}

\subsection{Number of queries per variable}
In this subsection, we show that without loss of generality, we can assume that all variables are queried roughly the same number of times.
More precisely, say that $f = h(g_1, \ldots, g_m)$ where each $g_j$ is $k$-local.
Then the total number of queries is at most $mk$, so the average variable is queried at most $\frac{mk}{n}$ times.
We will show that for the purpose of proving lower bounds on composition complexity, we can assume that \emph{every} variable is queried at most $\frac{mk}{n}$ times.

\begin{definition}[self-containing]
A family of functions $\set{f_n}_{n \in \N}$ is \emph{self-containing} if for any $n$ and any $I \subseteq [n]$, there is a subfunction of $f_n$ on $I$ that computes $f_{|I|}$.
\end{definition}

\begin{fact}
Both majority $\set{\Maj_n}_{n \in \N}$ and Hamming weight $\set{\HW_n}_{n \in \N}$ are self-containing.%
\end{fact}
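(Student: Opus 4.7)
The plan is to exhibit, for each $n$ and each $I \subseteq [n]$, an explicit assignment $\alpha \in \zo^{[n]\setminus I}$ to the outside variables such that the restriction of $f_n$ obtained by hardcoding those variables to $\alpha$ equals $f_{|I|}$ as a function on $\zo^I$.

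For $\HW_n$, the argument is immediate: I would set every variable in $[n]\setminus I$ to $0$, so that the resulting subfunction sends $x_I \in \zo^I$ to $\sum_{i\in I} x_i$, which is exactly $\HW_{|I|}(x_I)$.

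For $\Maj_n$, let $s := n - |I|$ and consider the family of assignments that fix $t$ of the outside variables to $1$ and the remaining $s-t$ to $0$. Under any such assignment, the subfunction becomes a threshold function of $|x_I|$ whose threshold is an affine shift (by $t$) of the majority threshold for $n$. The task reduces to choosing $t \in \{0,1,\dots,s\}$ so that this shifted threshold lands exactly on the threshold defining $\Maj_{|I|}$. Fixing the convention $\Maj_n(x)=1 \iff |x| > n/2$ and writing out $|x_I| + t > n/2$ as a condition on $|x_I|$, one verifies that $t = \lfloor s/2\rfloor$ or $t = \lceil s/2\rceil$ works, with the correct choice determined by the parities of $n$ and $|I|$. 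A short case analysis across the four parity combinations covers every $n$ and $I$, and any other fixed tie-breaking convention for majority can be handled analogously.

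The only thing that could pass for an obstacle is the parity bookkeeping in the majority case, but it is entirely routine and introduces no conceptual difficulty. Since this fact will be invoked mechanically in later reductions, I would present the argument as tersely as possible, essentially just displaying the two assignments above and noting that verification reduces to a threshold computation.
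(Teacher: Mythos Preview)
Your argument is correct. The paper itself offers no proof of this fact---it is simply stated and used---so there is nothing to compare against; your explicit restrictions (all zeros for $\HW_n$, and a balanced assignment for $\Maj_n$) are exactly the natural ones. One small point: the paper defines $\Maj_n(x)=1 \iff |x|\ge n/2$, not $|x|>n/2$, so your threshold computation should be redone with that convention; concretely, setting $t=\lceil n/2\rceil-\lceil |I|/2\rceil$ works uniformly and lands in $\{0,\ldots,s\}$ in all four parity cases, which you can verify in one line rather than leaving it as an unspecified case analysis.
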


\begin{lemma}
\label{lemma:wlog}
Let $\set{f_n}_{n \in \N}$ be a self-containing family of functions.
Suppose that $\CC_k(f_{2n}) \leq m$.
Then we can write $f_n = h(g_1, \ldots, g_m)$ where each $g_j$ is $k$-local and each variable is queried at most $\frac{mk}{n}$ times.
\end{lemma}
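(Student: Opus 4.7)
The plan is to use a simple averaging argument over the $2n$ variables combined with the self-containing property to extract a balanced construction on $n$ variables.

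First, I would start from the hypothesized construction $f_{2n} = h(g_1,\ldots,g_m)$ where each $g_j$ queries a set $I_j\subseteq [2n]$ of size at most $k$. The total number of variable--query incidences is $\sum_{j=1}^m |I_j| \le mk$, distributed across $2n$ variables. Hence the average number of queries per variable is at most $\frac{mk}{2n}$, and by a straightforward counting (Markov-style) argument, at least $n$ of the $2n$ variables are queried at most twice this average, i.e., at most $\frac{mk}{n}$ times. Let $S\subseteq [2n]$ be such a set of size exactly $n$.

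Next, I would invoke the self-containing property of $\{f_n\}$: there exists a restriction of $f_{2n}$ obtained by fixing the coordinates outside $S$ to some constants that, after relabeling $S$ as $[n]$, computes $f_n$. Applying this same restriction to the construction yields functions $g_j' : \zo^n \to \zo$ defined by hard-coding the fixed values into $g_j$. Each $g_j'$ depends on the variables of $I_j \cap S$, so it remains $k$-local, and the composition $h(g_1',\ldots,g_m')$ computes $f_n$. By construction, each of the $n$ variables is queried at most $\frac{mk}{n}$ times, giving the claimed bound.

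There is essentially no obstacle here: the only mildly delicate point is to confirm that the restriction does not increase the locality of any $g_j$ (it can only decrease it, since fixing variables to constants can only reduce the dependency set) and to be careful that the count of queries refers to the variables in $S$ under the new indexing. Both are immediate from the definitions, so the lemma follows.
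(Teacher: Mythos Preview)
Your proposal is correct and matches the paper's proof essentially step for step: average the query counts over the $2n$ variables, apply Markov to find $n$ variables each queried at most $\frac{mk}{n}$ times, then restrict using the self-containing property. The paper is slightly terser but the argument is the same.
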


\begin{proof}
Suppose $f_{2n} = h(g_1, \ldots, g_m)$, where each $g_j$ is $k$-local.
Let $q \coloneqq \frac{mk}{2n}$.
Then the average variable is queried $\leq q$ times, so by Markov's inequality at most half of the variables are queried more than $2q$ times.
Let $I$ be any set of $n$ variables, each of which is queried at most $2q = \frac{mk}{n}$ times.
Since $\set{f_n}_{n \in \N}$ is self-containing, there is a subfunction of $f_{2n}$ on $I$ that computes $f_n$.
The lemma follows by restricting each $g_j$ to $I$.
\end{proof}

\begin{corollary}
\label{cor:wlog}
In order to prove that $\CC_k(f_n) \geq \Omega(\frac{n}{k} \log k)$, it is enough to prove that if $f_n = h(g_1, \ldots, g_m)$ and each variable is queried at most $\frac{mk}{n}$ times by the inner functions $g_j$, then $m \geq \Omega(\frac{n}{k} \log k)$.
\end{corollary}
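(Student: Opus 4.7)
The plan is to derive this as an immediate consequence of \Cref{lemma:wlog}, reading the lemma contrapositively: since any minimal representation of $f_{2n}$ can be trimmed to a representation of $f_n$ in which every variable is queried at most $\frac{mk}{n}$ times, proving a lower bound under this uniformity assumption already controls the unrestricted composition complexity (at roughly half the input size).

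Concretely, I would suppose that the conditional statement has been established, namely that for every $n$ and every representation $f_n = h(g_1,\ldots,g_m)$ with each variable queried at most $\frac{mk}{n}$ times by the inner functions, one has $m \geq \Omega(\tfrac{n}{k}\log k)$. I then want to conclude $\CC_k(f_N) \geq \Omega(\tfrac{N}{k}\log k)$ for all $N$. Set $N=2n$ and let $m \coloneqq \CC_k(f_{2n})$. Applying \Cref{lemma:wlog} to the witnessing representation of $f_{2n}$ produces a representation $f_n = h'(g_1',\ldots,g_m')$ where every $g_j'$ is still $k$-local and every variable of $f_n$ is queried at most $\frac{mk}{n}$ times by the $g_j'$'s. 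The conditional hypothesis, applied to this representation at input length $n$, then yields
\[
\CC_k(f_{2n}) \;=\; m \;\geq\; \Omega\!\left(\tfrac{n}{k}\log k\right) \;=\; \Omega\!\left(\tfrac{2n}{k}\log k\right),
\]
where the final equality absorbs the factor of $2$ into the $\Omega(\cdot)$. This gives the desired bound for even input lengths, and odd input lengths are handled by a single-variable restriction using the self-containing property, which loses only a constant factor.

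I do not expect any real obstacle here: the content is entirely bookkeeping on top of \Cref{lemma:wlog}. The only point that deserves a moment of care is the alignment of parameters: the ``$\frac{mk}{n}$'' appearing in the hypothesis of the conditional statement is precisely the quantity produced by \Cref{lemma:wlog} after halving the input size, so the two fit together without any extra loss, and the doubling of $n$ is harmless in the $\Omega$-notation.
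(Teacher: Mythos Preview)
Your proposal is correct and is exactly the intended derivation: the paper states \Cref{cor:wlog} as an immediate consequence of \Cref{lemma:wlog} without writing out a proof, and your argument---apply the lemma with $m=\CC_k(f_{2n})$ to obtain a bounded-query representation of $f_n$, invoke the conditional hypothesis at input length $n$, and absorb the factor of $2$ (and the parity of $N$) into the $\Omega(\cdot)$---is precisely the bookkeeping the paper leaves implicit.
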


\begin{remark}
The quantity $\frac{mk}{n}$ corresponds exactly to our definition of composition overhead (recall the discussion following \Cref{def:complexity}).
This makes sense, since the composition overhead is the ratio of how many inner functions we need compared to the ideal situation where each variable is queried exactly once.
The more inner functions, the more queries per variable (assuming they all query roughly $k$ variables).
\end{remark}

\subsection{Information theory}
In this subsection, we introduce some information-theoretic notions and properties that are used in our proofs.
To learn more about information theory from a theoretical computer science perspective, we recommend checking out the Simons Institute workshop titled ``Information Theory Boot Camp''.\footnote{\url{https://simons.berkeley.edu/workshops/inftheory2015-boot-camp}}

The most important notion in information theory is the \emph{entropy} of a random variable $\bX$, denoted $\H[\bX]$, which represents ``how much randomness'' the variable contains, or how many bits I need to communicate to you on average for you to learn $\bX$.
\begin{definition}[entropy]
Given a random variable $\bX$ with support $D$, the \emph{entropy of $\bX$} is the quantity
\[
\H[\bX] \coloneqq \sum_{x \in D} \Pr[\bX = x] \log \p*{\frac{1}{\Pr[\bX = x]}} = \E_{\bX' \sim \bX}\sqb*{\log \p*{\frac{1}{\Pr[\bX = \bX']}}}
\]
where $\bX'$ is an independent copy of $\bX$.
\end{definition}

A related notion is the \emph{conditional entropy} $\H\sqbcond{\bX}{\bY}$ of two random variables $\bX$ and $\bY$, which represents the ``how much randomness remains'' in $\bX$ once you know $\bY$, or how many bits I need to communicate to you on average for you to learn $\bX$, assuming that you already know $\bY$.

\begin{definition}[conditional entropy]
Given two random variables $(\bX, \bY)$ over domain $D$, the \emph{entropy of $\bX$ conditioned on $\bY$}
is the quantity
\begin{align*}
\H\sqbcond{\bX}{\bY}
&\coloneqq \sum_{(x,y) \in D} \Pr[\bX = x \wedge \bY = y] \log \p*{\frac{1}{\Pr\sqbcond{\bX = x}{\bY = y}}}\\
&= \E_{(\bX',\bY') \sim (\bX,\bY)} \sqb*{\log\p*{\frac{1}{\Pr\sqbcond{\bX = \bX'}{\bY = \bY'}}}}
\end{align*}
where $(\bX',\bY')$ is an independent copy of $(\bX, \bY)$.
\end{definition}

The entropy and conditional entropy have the following properties, which we use in our proofs.

\begin{fact}[bounds]
If $\bX$ is a random variable over a finite domain $D$ and $\bY$ is a random variable, then
\[
0 \leq \H\sqbcond{\bX}{\bY} \leq \H[\bX] \leq \log |D|,
\]
where the last inequality is tight iff $\bX$ is uniform on $D$.
\end{fact}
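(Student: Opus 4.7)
The statement packages three standard inequalities of Shannon entropy, together with the tightness characterization of the rightmost one. The plan is to prove each link of the chain separately; the leftmost is immediate from the definition, and the other two each reduce to a single application of Jensen's inequality applied to the (strictly) concave function $\log$.

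For the leftmost bound $0 \leq \H\sqbcond{\bX}{\bY}$, I would simply expand the definition and observe that every summand is nonnegative: each probability lies in $[0,1]$, so each term $\log(1/\Pr\sqbcond{\bX=x}{\bY=y})$ is $\geq 0$. For the rightmost bound $\H[\bX] \leq \log|D|$, I would apply Jensen's inequality to $\log$ with the random variable $1/\Pr[\bX = \bX']$, where $\bX'$ is an independent copy of $\bX$:
\[
\H[\bX] = \E_{\bX'}\sqb*{\log\p*{\tfrac{1}{\Pr[\bX = \bX']}}} \leq \log\p*{\E_{\bX'}\sqb*{\tfrac{1}{\Pr[\bX = \bX']}}} = \log|\mathrm{supp}(\bX)| \leq \log|D|.
\]
For the tightness condition, I would note that Jensen applied to the strictly concave $\log$ is tight iff $1/\Pr[\bX=x]$ is constant on the support of $\bX$---i.e.\ $\bX$ is uniform on its support---while $|\mathrm{supp}(\bX)| \leq |D|$ is tight iff the support equals $D$; combining these, equality in the full chain holds iff $\bX$ is uniform on $D$.

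For the middle bound $\H\sqbcond{\bX}{\bY} \leq \H[\bX]$, the cleanest route is to show the equivalent statement that the mutual information $I(\bX;\bY) \coloneqq \H[\bX] - \H\sqbcond{\bX}{\bY}$ is nonnegative. I would rewrite $I(\bX;\bY) = \E\sqb*{\log(p(\bX,\bY)/(p(\bX)p(\bY)))}$ and apply Jensen once more: since $\E\sqb{p(\bX)p(\bY)/p(\bX,\bY)} \leq 1$ (the expectation being taken over the support of $(\bX,\bY)$), we obtain $I(\bX;\bY) \geq -\log 1 = 0$. None of these three steps is a real obstacle---they are all textbook facts with one- or two-line proofs---so the only place where a bit of care is warranted is in the tightness analysis for the rightmost inequality, which combines two independent sources of slack (uniformity on the support, and the support coinciding with $D$) that both need to be nailed down to characterize equality.
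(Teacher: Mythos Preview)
Your proof is correct and follows the standard textbook route (nonnegativity from the definition, Jensen for the $\log|D|$ bound and its tightness, and nonnegativity of mutual information via Jensen for the middle inequality). The paper, however, does not prove this statement at all: it is stated as a \emph{Fact} in the preliminaries without any argument, so there is nothing to compare against beyond noting that your write-up supplies exactly the kind of routine verification the authors chose to omit.
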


\begin{fact}[subadditivity]
Let $\bX_1$ and $\bX_2$ be two random variables. Then $\H[\bX_1, \bX_2] \leq \H[\bX_1] + \H[\bX_2]$, with equality iff $\bX_1$ and $\bX_2$ are independent.
Similarly, $\H\sqbcond{\bX_1, \bX_2}{\bY} \leq \H\sqbcond{\bX_1}{\bY} + \H\sqbcond{\bX_2}{\bY}$.
\end{fact}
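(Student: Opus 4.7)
The plan is to derive both inequalities from Gibbs' inequality (equivalently, the non-negativity of Kullback--Leibler divergence), which itself is a one-line consequence of Jensen's inequality applied to the concave logarithm. I would first handle the unconditional statement, and then lift it to the conditional one by averaging over $\bY$.

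For the unconditional inequality, let $p(x_1, x_2) \coloneqq \Pr[\bX_1 = x_1 \wedge \bX_2 = x_2]$ denote the joint pmf and $p(x_i)$ the respective marginals. Applying Jensen's inequality to the concave function $\log$ with respect to the joint distribution gives
\[
\sum_{(x_1,x_2)\,:\,p(x_1,x_2) > 0} p(x_1, x_2)\, \log \frac{p(x_1)\, p(x_2)}{p(x_1, x_2)} \;\le\; \log \sum_{(x_1,x_2)\,:\,p(x_1,x_2) > 0} p(x_1)\, p(x_2) \;\le\; \log 1 \;=\; 0.
\]
Splitting the logarithm on the left-hand side yields $-\H[\bX_1, \bX_2] + \H[\bX_1] + \H[\bX_2] \ge 0$, which is the desired bound. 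Both inequalities above become equalities precisely when the joint support is the full product of the marginal supports and the ratio $p(x_1, x_2) / (p(x_1) p(x_2))$ is constant on that support (by the strict concavity of $\log$); normalization then forces the constant to be $1$. Hence equality holds iff $\bX_1$ and $\bX_2$ are independent.

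For the conditional case, I would apply the unconditional bound to each conditional distribution $(\bX_1, \bX_2) \mid \bY = y$, obtaining
\[
\H\sqbcond{\bX_1, \bX_2}{\bY = y} \;\le\; \H\sqbcond{\bX_1}{\bY = y} + \H\sqbcond{\bX_2}{\bY = y}
\]
for every $y$ in the support of $\bY$, and then take the expectation over $\bY$. Matching this against the expectation-form definition of conditional entropy given in the preliminaries directly yields the conditional subadditivity inequality.

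The main obstacle is purely bookkeeping: one must restrict the summations to the support of the joint so that no $\log 0$ ever appears, and align the two sides of the expected Jensen bound with the definitions of $\H$ and $\H\sqbcond{\cdot}{\cdot}$ stated in the paper. There is no genuine difficulty here, since this is a standard textbook fact; the only subtle point is being careful about the equality case.
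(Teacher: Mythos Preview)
Your proposal is correct and gives the standard textbook argument via Gibbs' inequality. However, there is nothing to compare against: the paper states this as a \texttt{Fact} in the preliminaries and does not supply a proof at all, treating it as background information-theoretic knowledge. So your write-up simply fills in what the paper intentionally left out.
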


\begin{fact}[dependence]
Let $\bX, \bY$ be random variables such that $\bX_2$ is completely determined by $\bX_1$ (i.e. $\bX_2 = f(\bX_1)$ where $f$ is a function). Then we have the following:
\begin{itemize}
    \item $\H\sqbcond{\bX_2}{\bX_1} = 0$;
    \item $\H\sqbcond{\bY}{\bX_1} \leq \H\sqbcond{\bY}{\bX_2}$.
\end{itemize}
\end{fact}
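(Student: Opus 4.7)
The plan is to prove the two items in order, using the first as a component of the second.

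For $\H\sqbcond{\bX_2}{\bX_1} = 0$, I would appeal directly to the definition of conditional entropy. For every pair $(x_1, x_2)$ in the joint support of $(\bX_1, \bX_2)$, we have $x_2 = f(x_1)$, so $\Pr\sqbcond{\bX_2 = x_2}{\bX_1 = x_1} = 1$ and the corresponding term in the defining sum is $\log(1/1) = 0$. Summing over all such pairs yields $0$.

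For $\H\sqbcond{\bY}{\bX_1} \leq \H\sqbcond{\bY}{\bX_2}$, the strategy is to expand the joint entropy $\H[\bX_1, \bX_2, \bY]$ in two different ways using the chain rule (which follows directly from rearranging the definitions of $\H$ and $\H\sqbcond{\cdot}{\cdot}$) and compare them via the subadditivity fact already stated. Conditioning on $\bX_1$ first and applying the first bullet, both in its stated form and in the form $\H\sqbcond{\bX_2}{\bX_1, \bY} = 0$ (since $\bX_2$ remains a function of $\bX_1$ even after further conditioning on $\bY$), one obtains $\H[\bX_1, \bX_2, \bY] = \H[\bX_1] + \H\sqbcond{\bY}{\bX_1}$. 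Conditioning on $\bX_2$ first and applying the conditional form of subadditivity to the pair $(\bX_1, \bY)$, one obtains $\H[\bX_1, \bX_2, \bY] \leq \H[\bX_2] + \H\sqbcond{\bX_1}{\bX_2} + \H\sqbcond{\bY}{\bX_2} = \H[\bX_1, \bX_2] + \H\sqbcond{\bY}{\bX_2} = \H[\bX_1] + \H\sqbcond{\bY}{\bX_2}$, where the last equality uses the first bullet once more to collapse $\H[\bX_1, \bX_2]$. Comparing the two expressions, the $\H[\bX_1]$ terms cancel and the desired inequality drops out.

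The main bookkeeping obstacle is that the chain rule is not itself listed as a named fact in the paper, but it is an immediate rearrangement of the definitions; I would insert a one-line justification. The conditional form of subadditivity also needs to be matched to the specific application above, but this is a routine check (it is obtained by averaging the unconditional version over the value of the conditioning variable). Otherwise the proof amounts to unpacking definitions and applying the already-listed subadditivity fact, and no new ideas are needed.
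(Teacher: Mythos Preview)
The paper does not actually prove this statement: it is listed among the basic information-theoretic facts in the preliminaries and is left unproved, as is standard for such background material. So there is no ``paper's proof'' to compare against.

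Your argument is correct. The first bullet is immediate from the definition, as you say. For the second bullet, your two-way expansion of $\H[\bX_1,\bX_2,\bY]$ via the chain rule is a clean and standard route: the equality $\H[\bX_1,\bX_2,\bY] = \H[\bX_1] + \H\sqbcond{\bY}{\bX_1}$ uses that $\H\sqbcond{\bX_2}{\bX_1,\bY}=0$, and the inequality $\H[\bX_1,\bX_2,\bY] \le \H[\bX_1] + \H\sqbcond{\bY}{\bX_2}$ uses conditional subadditivity together with the first bullet to collapse $\H[\bX_1,\bX_2]$ to $\H[\bX_1]$. Your bookkeeping remarks are apt: the chain rule is indeed just a rearrangement of the definitions, and the conditional subadditivity fact is stated in the paper, so nothing external is being smuggled in. The proof is complete as written.
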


Finally, a crucial notion in our proof is the \emph{mutual information} $\I[\bX:\bY]$ of two random variables $\bX$ and $\bY$, which represents ``how much information $\bX$ reveals about $\bY$'', or symmetrically, ``how much information $\bY$ reveals about $\bX$''.
\begin{definition}[mutual information]
Given two random variables $\bX, \bY$, the \emph{mutual information} of $\bX$ and $\bY$
is the quantity $\I[\bX:\bY] \coloneqq \H[\bX] - \H\sqbcond{\bX}{\bY}$.
\end{definition}

\begin{fact}[symmetry of mutual information]
$\I[\bX:\bY] = \I[\bY:\bX]$.
\end{fact}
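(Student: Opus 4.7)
The plan is to expand the definition of $\I[\bX:\bY]$ and rewrite it in a form that is manifestly symmetric in $\bX$ and $\bY$. Concretely, the goal is to show
$$\I[\bX:\bY] = \sum_{x,y} \Pr[\bX = x, \bY = y]\,\log\frac{\Pr[\bX = x, \bY = y]}{\Pr[\bX = x]\,\Pr[\bY = y]},$$
after which the conclusion is immediate, since swapping the roles of $\bX$ and $\bY$ leaves the right-hand side unchanged.

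First I would rewrite the unconditional entropy as a sum over the joint distribution, using the marginalization identity $\Pr[\bX=x] = \sum_y \Pr[\bX = x, \bY = y]$, to get
$$\H[\bX] = \sum_{x,y} \Pr[\bX = x, \bY = y]\,\log\frac{1}{\Pr[\bX = x]}.$$
Next I would subtract the definition of $\H\sqbcond{\bX}{\bY}$ term by term and combine the two logarithms, which gives
$$\I[\bX:\bY] = \sum_{x,y} \Pr[\bX = x, \bY = y]\,\log\frac{\Pr\sqbcond{\bX = x}{\bY = y}}{\Pr[\bX = x]}.$$
Finally, applying the identity $\Pr\sqbcond{\bX = x}{\bY = y}\,\Pr[\bY = y] = \Pr[\bX = x, \bY = y]$ converts the numerator into the symmetric form displayed above, completing the argument.

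There is essentially no obstacle here beyond the standard convention that $0 \log 0 = 0$, which is needed to handle $(x,y)$ pairs of zero joint probability. An equally short alternative proceeds via the chain rule $\H[\bX, \bY] = \H[\bX] + \H\sqbcond{\bY}{\bX}$ (which follows from the same manipulation) together with its twin obtained by swapping $\bX$ and $\bY$; equating the two expressions for $\H[\bX, \bY]$ and rearranging yields the claim directly.
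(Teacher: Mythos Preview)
Your argument is correct and standard: rewriting $\I[\bX:\bY]$ as the Kullback--Leibler form $\sum_{x,y}\Pr[\bX=x,\bY=y]\log\frac{\Pr[\bX=x,\bY=y]}{\Pr[\bX=x]\Pr[\bY=y]}$ makes the symmetry manifest, and the alternative via the chain rule for joint entropy works equally well. Note, however, that the paper does not actually prove this statement; it is recorded as a \textbf{Fact} in the preliminaries and taken as a standard property of mutual information without argument. So there is no paper proof to compare against, and your proposal simply supplies one of the textbook justifications.
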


\section{The less it is queried, the more it is revealed}
\label{sec:hw-lb}

\paragraph*{Hamming weight: a multi-output function.}
Even though the main function of interest in this paper is the majority function, we will first
prove a lower bound for the related \emph{Hamming weight} function: a ``multi-output'' function (as opposed to binary-output) that reveals the entire Hamming weight of the input string.
\begin{definition}
Let $\HW_n : \zo^n \to \set{0,1,\ldots, n} : x \mapsto |x| = x_1 + \cdots + x_n$ be the Hamming weight function.
\end{definition}

The most natural way to
express $\HW_n$ as $h(g_1, \ldots, g_m)$
is to split the $n$ variables into groups of $k$ variables, and for each group to create $\ceil{\log (k+1)}$ inner functions $g_j$, each computing one bit of the sum of the $k$ variables in this group.
The function $h$ can then compute the Hamming weight of the whole input string by first recovering the sum for each group, then adding them up.
Clearly, each of the inner functions is $k$-local,
and there are $O(\frac{n}{k}\log k)$ of them, so $\CC_k(\HW_n) \leq O(\frac{n}{k}\log k)$.
We will show that this is optimal: $\CC_k(\HW_n) = \Theta(\frac{n}{k}\log k)$.

\paragraph*{Simple counting does not give much.}
It is easy to see $\CC_k(\HW_n)>n/k$:
suppose that there were a way to represent $\HW_n$ as $h(g_1, \ldots, g_m)$ with $m=n/k$.
Then each of the inner functions $g_j$ would query $k$ variables and each variable would be queried only once.
Take $g_1$, and consider the $k$ variables it queries.
Say that we fix all other variables to $0$.
Then there are still $k+1$ possible values for the Hamming weight.
But this fixes the outputs of $g_2,\ldots, g_m$, so there can only be two possible values for $h(g_1, \ldots, g_m)$ (one where $g_1$ outputs $0$ and the other where $g_1$ outputs $1$), so we have a contradiction.

In general, it is easy to see that each set of $r$ variables must collectively be queried by at least $\log(r+1)$ different inner functions.
But this observation is not enough to show that the average variable will need to be queried a super-constant number of times.

\paragraph*{A very counterintuitive lemma.}
Basic counting arguments like the above do not seem to say anything that keeps a variable from being queried by only a constant number of inner functions.
But it turns out there \emph{is} something we can say about variables that are queried by few inner functions: the outputs of the inner functions must reveal a lot about their value, in terms of mutual information.
Put another way, this means that if you are given the outputs of the inner functions, you can often guess the value of such variables better than random chance.

\begin{lemma}[key lemma]
\label{lemma:main}
Suppose that $\HW_n = h(g_1, \ldots, g_m)$ and that variable $i$ is queried at most $q$ of the inner functions $g_1, \ldots, g_m$.
Let $\bX \sim \{0,1\}^n$ be a uniformly random input.
Then $\I\sqb{\bX_i:g_1(\bX), \ldots, g_m(\bX)} \geq 2^{-O(q)}$.
\end{lemma}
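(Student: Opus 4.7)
The plan is to re-express $\I[\bX_1 : G]$ (taking $i = 1$ for concreteness) as a Jensen--Shannon divergence that measures how much the $q$ inner functions querying $\bX_1$ reveal about it, and then to lower-bound that divergence via a combinatorial flow argument on an auxiliary DAG. First, I would partition $G = (G_1, G_{-1})$, where $G_1 \in \zo^q$ collects the outputs of the $q$ inner functions that query $\bX_1$ and $G_{-1} \in \zo^{m-q}$ collects the outputs of the rest (each of which depends only on $\bX_{-1} \coloneqq (\bX_2, \ldots, \bX_n)$). Since $\bX_1$ is independent of $G_{-1}$, the chain rule gives $\I[\bX_1 : G] = \I[\bX_1 : G_1 \mid G_{-1}]$. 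For any fixed $g_{-1}$, letting $P$ and $Q$ denote the conditional distributions on $\zo^q$ of $G_1$ under $\bX_1 = 0$ and $\bX_1 = 1$ respectively (both given $G_{-1}(\bX_{-1}) = g_{-1}$), one checks that $\I[\bX_1 : G_1 \mid G_{-1} = g_{-1}]$ equals the Jensen--Shannon divergence $\mathrm{JSD}(P, Q)$.

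The key structural constraint is that $\HW(\bX) = h(G_1, G_{-1})$ is determined by $(G_1, G_{-1})$; since $G_1 \in \zo^q$, the set of Hamming weights achievable conditional on $G_{-1} = g_{-1}$ has size at most $2^q$. Combined with the fact that flipping $\bX_1$ shifts $\HW(\bX)$ by exactly $1$, this forces $\HW(\bX_{-1})$ to lie in a set $V_{g_{-1}}$ of size at most $t \coloneqq 2^q - 1$. I would then build a layered DAG $D_{g_{-1}}$ on vertex set $\zo^q$ with an edge $a \to b$ whenever $h(b, g_{-1}) = h(a, g_{-1}) + 1$, weighting each edge by the number of $y \in \zo^{n-1}$ satisfying $G_{-1}(y) = g_{-1}$, $G_1(0, y) = a$, and $G_1(1, y) = b$. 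Letting $T \coloneqq |G_{-1}^{-1}(g_{-1})|$, the products $T \cdot P(a)$ and $T \cdot Q(a)$ are precisely the out-flow and in-flow at vertex $a$ in $D_{g_{-1}}$.

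To lower-bound $d_{\mathrm{TV}}(P, Q) = \frac{1}{2T}\sum_a |TP(a) - TQ(a)|$, I would group vertices by their level $\ell \coloneqq h(a, g_{-1})$. Summing out-flows across level $\ell$ yields $T_\ell \coloneqq |\{y \in G_{-1}^{-1}(g_{-1}) : \HW(y) = \ell\}|$, while summing in-flows across level $\ell$ yields $T_{\ell-1}$. By the triangle inequality,
\[
\sum_a |TP(a) - TQ(a)| \;\geq\; \sum_\ell |T_\ell - T_{\ell-1}|,
\]
and the right-hand side is the total variation of the nonnegative sequence $(T_\ell)$ extended by $0$ outside $V_{g_{-1}}$. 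A nonnegative sequence that rises from $0$, peaks, and returns to $0$ has total variation at least twice its maximum, so this is $\geq 2\max_\ell T_\ell \geq 2T/t$. Hence $d_{\mathrm{TV}}(P, Q) \geq 1/t \geq 2^{-q}$, and Pinsker's inequality applied to the two KL divergences in $\mathrm{JSD}(P, Q) = \frac{1}{2}\mathrm{KL}(P \| M) + \frac{1}{2}\mathrm{KL}(Q \| M)$ (with $M = (P+Q)/2$) gives $\mathrm{JSD}(P, Q) \geq \Omega\big(d_{\mathrm{TV}}(P, Q)^2\big) \geq 2^{-O(q)}$; averaging over $g_{-1}$ then yields $\I[\bX_1 : G] \geq 2^{-O(q)}$.

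The main obstacle I anticipate is notational rather than conceptual: cleanly handling the bookkeeping when the image of $h(\cdot, g_{-1})$ on $\zo^q$ has fewer than $2^q$ distinct values, or when $V_{g_{-1}}$ has internal gaps. Neither genuinely weakens the bound---gaps produce vertices with one-sided flow that actually boost $d_{\mathrm{TV}}$, and collisions just shrink the DAG in ways that only strengthen the argument.
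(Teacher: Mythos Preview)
Your argument is correct and shares the paper's core combinatorial maneuver: condition on the outputs $G_{-1}$ of the inner functions that do not query variable $i$, observe that at most $2^q$ Hamming-weight levels remain, and exploit that the level-count sequence $(T_\ell)$ has small support summing to $T$. Where you diverge is in how this is converted into an information bound. The paper (which actually proves the more general partial-function statement, \Cref{lemma:main-partial}, and specializes) locates a single weight $w^*$ at which the sequence jumps by at least $2^{-O(q)}$, then argues directly that $\Pr[\bX_i=1 \mid |\bX|=w^*]$ is bounded away from $1/2$ and reads off the entropy deficit. You instead sum all the jumps at once via $\sum_\ell |T_\ell-T_{\ell-1}| \geq 2\max_\ell T_\ell \geq 2T/t$, interpret this as $d_{\mathrm{TV}}(P,Q)\geq 2^{-q}$, and finish with Pinsker applied to the Jensen--Shannon decomposition. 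Your route is cleaner for the total-function case and even gives a slightly better exponent. The paper's more hands-on argument, however, is what is needed for the partial-function version used in the reduction to $\Maj_n$: there $\bX$ is uniform over a subset $D\subseteq\zo^n$, so $\bX_i$ is no longer independent of $G_{-1}$, and your chain-rule step $\I[\bX_i:G]=\I[\bX_i:G_1\mid G_{-1}]$ together with the clean JSD identity would require additional bookkeeping to carry through.
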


Consider how counterintuitive this statement is: it says that the \emph{less} the $i\th$ variable is queried by the inner functions $g_1, 
\ldots, g_m$, the \emph{more} it will be revealed by their collective outputs.

\paragraph*{A lower bound for Hamming weight, assuming~\Cref{lemma:main}.}
We will not prove \Cref{lemma:main} until \Cref{sec:proof-main}, but let us see how it implies the lower bound we want.
First, we will show the intuitive fact that ``if the inner functions reveal a lot about many of the variables, then there must be many inner functions''.
\begin{corollary}
\label{corollary}
Suppose that $\HW_n = h(g_1, \ldots, g_m)$ and each variable is queried by at most $q$ of the inner functions $g_1, \ldots, g_m$.
Then $m \geq n \cdot 2^{-O(q)}$.
\end{corollary}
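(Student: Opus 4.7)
The plan is to turn the per-variable information lower bound of \Cref{lemma:main} into a global one by summing over $i \in [n]$, and then upper bound the sum by the total amount of information the inner functions can possibly convey, which is at most $m$ bits. Let $\bY \coloneqq (g_1(\bX), \ldots, g_m(\bX))$, viewed as an $m$-bit random variable.

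First I would apply \Cref{lemma:main} to each coordinate, yielding
\[
\sum_{i=1}^n \I[\bX_i : \bY] \;\geq\; n \cdot 2^{-O(q)}.
\]
Next, I would bound this sum from above by $\I[\bX : \bY]$. Since $\bX$ is uniform on $\zo^n$, the coordinates $\bX_i$ are independent, so $\sum_i \H[\bX_i] = \H[\bX] = n$. By subadditivity of conditional entropy, $\H[\bX \mid \bY] \leq \sum_i \H[\bX_i \mid \bY]$. Combining these,
\[
\sum_{i=1}^n \I[\bX_i : \bY]
\;=\; \sum_{i=1}^n \H[\bX_i] - \sum_{i=1}^n \H[\bX_i \mid \bY]
\;\leq\; \H[\bX] - \H[\bX \mid \bY]
\;=\; \I[\bX : \bY].
\]
Finally, $\I[\bX : \bY] \leq \H[\bY] \leq m$, since $\bY$ takes values in $\zo^m$. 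Chaining the three inequalities gives $m \geq n \cdot 2^{-O(q)}$, as desired.

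There is no real obstacle here once \Cref{lemma:main} is in hand: the argument is just a clean accounting of information, using independence of the $\bX_i$ to make $\sum_i \H[\bX_i]$ additive and subadditivity of conditional entropy to control the conditional term in the other direction. The only potentially delicate point is checking that the inequality $\sum_i \H[\bX_i \mid \bY] \geq \H[\bX \mid \bY]$ is indeed the right direction of the subadditivity fact stated in the preliminaries, but this is immediate from $\H[\bX_1, \bX_2 \mid \bY] \leq \H[\bX_1 \mid \bY] + \H[\bX_2 \mid \bY]$ applied inductively.
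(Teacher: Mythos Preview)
Your proposal is correct and essentially identical to the paper's own proof: both sandwich $\I[\bX:\bY]$ between $\sum_i \I[\bX_i:\bY]$ (via independence of the $\bX_i$ and subadditivity of conditional entropy) and $m$ (via $\H[\bY]\le m$), then invoke \Cref{lemma:main} termwise. The only difference is presentational order.
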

\begin{proof}
We bound the quantity $\I\sqb{\bX:g_1(\bX), \ldots, g_m(\bX)}$ in two ways.
First, we give an upper bound on $m$ based on the fact that there are only $m$ inner functions, each of which outputs a single bit.
Then we lower bound this same quantity using~\Cref{lemma:main}.
Combining these bounds yields the corollary.

On the one hand,
\begin{align*}
\I\sqb{\bX:g_1(\bX), \ldots, g_m(\bX)}
&= \H\sqb{g_1(\bX), \ldots, g_m(\bX)} - \H\sqbcond{g_1(\bX), \ldots, g_m(\bX)}{\bX}\\
&= \H[g_1(\bX), \ldots, g_m(\bX)]\tag{because $\bX$ determines $g_j(\bX)$ completely}\\
&\leq \H[g_1(\bX)] + \ldots + \H[g_m(\bX)]\tag{because $\H[\cdot]$ is subadditive}\\
&\leq m.\tag{because each $g_j$ has a binary output}
\end{align*}
On the other hand,
\begin{align*}
\I\sqb{\bX:g_1(\bX), \ldots, g_m(\bX)}
&= \H\sqb{\bX} - \H\sqbcond{\bX}{g_1(\bX), \ldots, g_m(\bX)}\\
&= \p*{\sum_{i\in[n]}\H\sqb{\bX_i}} - \H\sqbcond{\bX}{g_1(\bX), \ldots, g_m(\bX)}\tag{because the $\bX_i$ are independent}\\
&\geq \sum_{i\in[n]}\p*{\H\sqb{\bX_i} - \H\sqbcond{\bX_i}{g_1(\bX), \ldots, g_m(\bX)}}\tag{because $\H\sqbcond{\cdot}{\bY}$ is subadditive for any $\bY$}\\
&= \sum_{i\in[n]}\I\sqb{\bX_i:g_1(\bX), \ldots, g_m(\bX)}\\
&\geq \sum_{i\in[n]}2^{-O(q)}\tag{by \Cref{lemma:main}}\\
&= n \cdot 2^{-O(q)}.
\end{align*}
The corollary follows from combining these two bounds.
\end{proof}

We can now use \Cref{corollary} to deduce the desired lower bound.
\begin{theorem}[composition complexity of $\HW_n$]
\label{thm:hamming-weight}
$\CC_k(\HW_n) \geq \Omega(\frac{n}{k}\log k)$.
\end{theorem}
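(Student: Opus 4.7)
The plan is to deduce Theorem~\ref{thm:hamming-weight} directly by plugging \Cref{corollary} into the reduction of \Cref{cor:wlog} and then solving the resulting transcendental inequality for the composition overhead. The three ingredients are already in place: \Cref{cor:wlog} tells us it suffices to consider representations of $\HW_n$ in which no variable is queried more than $q \coloneqq mk/n$ times, and \Cref{corollary} says that any such representation must have $m \geq n \cdot 2^{-Cq}$ for some absolute constant $C > 0$.

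First I would set $\alpha \coloneqq mk/n$, the composition overhead. Plugging $q = \alpha$ into \Cref{corollary} and rearranging the bound $\tfrac{n}{k}\alpha = m \geq n \cdot 2^{-C\alpha}$ gives the clean inequality
\[
\alpha \cdot 2^{C\alpha} \;\geq\; k.
\]
Taking logarithms base $2$, this becomes $C\alpha + \log_2 \alpha \geq \log_2 k$.

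Next I would argue that this forces $\alpha = \Omega(\log k)$. Suppose for contradiction that $\alpha \leq \tfrac{1}{2C}\log_2 k$. Then $C\alpha \leq \tfrac{1}{2}\log_2 k$, so the inequality above forces $\log_2 \alpha \geq \tfrac{1}{2}\log_2 k$, i.e., $\alpha \geq \sqrt{k}$. For the range $k \leq n^{1-\eps}$ (and in particular $k$ large), this contradicts the assumed upper bound $\alpha \leq \tfrac{1}{2C}\log_2 k \ll \sqrt{k}$. Hence $\alpha \geq \tfrac{1}{2C}\log_2 k = \Omega(\log k)$, which translates back to $m \geq \Omega(\tfrac{n}{k}\log k)$, as desired.

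The only mildly delicate point is the ``for contradiction'' step: one has to verify that the small range of $k$ (say $k = O(1)$) is also covered, but there the bound $\CC_k(\HW_n) \geq \Omega(n/k\cdot \log k) = \Omega(n/k)$ follows from the trivial counting bound $\CC_k(\HW_n) \geq n/k$ up to constants. Apart from that, the proof is just a one-line manipulation of \Cref{corollary} combined with \Cref{cor:wlog}; all of the real work has been done in proving the key lemma, and the main theorem for majority is then obtained separately in \Cref{sec:hw-to-maj} by bootstrapping this multi-output lower bound.
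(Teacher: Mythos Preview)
Your proposal is correct and follows essentially the same approach as the paper: invoke \Cref{cor:wlog} to assume each variable is queried at most $q=mk/n$ times, apply \Cref{corollary} to obtain $m\geq n\cdot 2^{-O(q)}$, and then solve $q\cdot 2^{O(q)}\geq k$ to conclude $q\geq\Omega(\log k)$. The only minor remark is that your appeal to the hypothesis $k\leq n^{1-\eps}$ is unnecessary here---the contradiction $\sqrt{k}\leq \tfrac{1}{2C}\log_2 k$ fails for all sufficiently large $k$ irrespective of $n$, and \Cref{thm:hamming-weight} as stated carries no such restriction.
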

\begin{proof}
Suppose that $\HW_n = f(g_1, \ldots, g_m)$, where each of the inner functions $g_1, \ldots, g_m$ is $k$-local.
By \Cref{cor:wlog}, we only have to show that $m \geq \Omega(\frac{n}{k} \log k)$ under the assumption that each variable is queried at most $q \coloneqq \frac{mk}{n}$ times.
Then by \Cref{corollary},
\[
\frac{nq}{k}
= m
\geq n \cdot 2^{-O(q)}.
\]
Rearranging, we get
\[q \cdot 2^{O(q)} \geq k \Rightarrow q \geq \Omega(\log k),\]
which means $m = \frac{n}{k}q \geq \Omega(\frac{n}{k}\log k)$.
\end{proof}

\section{From Hamming weight to majority}
\label{sec:hw-to-maj}

In the previous section, we gave a lower bound on the composition complexity for the Hamming weight function $\HW_n$.
This is a multi-output function (it has more than two possible outputs), and the proof arguably relied a lot on this fact.
In this section, we show how to extend the lower bound to $\Maj_n$, a binary-output function, by manipulating it to become multi-output.

\begin{definition}[majority function]
Let
\[
\Maj_n :\zo^n \to \zo: x \mapsto
\begin{cases}
1 & \text{if $|x| \geq n/2$}\\
0 & \text{otherwise}
\end{cases}
\]
be the majority function.
\end{definition}

\subsection{Reduction toolkit}
Before proving the lower bound, let us introduce the tools we will use to reduce majority to Hamming weight.

\paragraph*{Control variables.}
$\Maj_n(x)$ tells us whether $x$ has Hamming weight $\geq n/2$ or $< n/2$, but nothing more.
For example, what if we wanted it to also tell us something special when the Hamming weight is \emph{exactly} $n/2$?
Suppose $n$ is even, then we could figure this out by looking at the values of $\Maj_{n+1}(x_1, \ldots, x_n, 0)$ and $\Maj_{n+1}(x_1, \ldots, x_n, 1)$: indeed,
\begin{itemize}
    \item if $|x|< n/2$ then both return $0$;
    \item if $|x|>n/2$ then both return $1$;
    \item if $|x|=n/2$ then $\Maj_{n+1}(x_1, \ldots, x_n, 0)=0$ but $\Maj_{n+1}(x_1, \ldots, x_n, 1)=1$.
\end{itemize}
Note how the last variable of $\Maj_{n+1}$ does not stay free, but rather is assigned a fixed value that modifies the behavior of $\Maj_{n+1}$ as a function of $x_1,\ldots, x_n$.
For this reason, we will call it a \emph{control variable}, and we will call the $n$ other variables \emph{free variables}.

Now, assume that $\Maj_{n+1}$ can be computed as $h(g_1, \ldots, g_m)$.
Then we can compute the function
\[
f: \zo^n \to \set{0,1,2}: x \mapsto
\begin{cases}
0 & \text{if $|x|<n/2$}\\
1 & \text{if $|x|=n/2$}\\
2 & \text{if $|x|>n/2$}
\end{cases}
\]
as $h'(g^0_1, \ldots, g^0_l, g^1_1, \ldots, g^1_l)$ where
\[
g^0_j(x) \coloneqq g_j(x_1, \ldots, x_n, 0) \qquad g^1_j(x) \coloneqq g_j(x_1, \ldots, x_n, 1).
\]
This suggests a potential approach for proving a lower bound on the composition complexity $\Maj_n$: add enough control variables to it so that the values we get are enough to determine the Hamming weight of $x$, then use the lower bound for $\HW_n$ as a black box.
Indeed, if we know all of the values
\[
\left\{
\begin{array}{c}
     \Maj_{2n-1}(x_1, \ldots, x_n, 0, 0, \ldots, 0) \\
     \Maj_{2n-1}(x_1, \ldots, x_n, 1, 0, \ldots, 0) \\
     \vdots\\
     \Maj_{2n-1}(x_1, \ldots, x_n, 1, 1, \ldots, 1)
\end{array}
\right.
\]
then we can recover $|x|$.
However, this would cause a huge blowup in the number of inner functions needed to compute it:
assuming $\Maj_{2n-1}$ can be computed as $h(g_1, \ldots, g_m)$, this would only guarantee that $\HW_n$ can be computed as a function of the $nm$ components 
\[g_j(x_1, \ldots, x_n, 0, \ldots, 0), \ldots, g_j(x_1, \ldots, x_n, 1, \ldots, 1)\]
for $j \in [m]$, thus the $\Omega(\frac{n}{k}\log k)$ lower bound we have for $\HW_n$ would fail to give any nontrivial lower bound on $m$.
So we need to do this in a smarter way.

\paragraph*{Buffer variables.}
We will avoid this blowup by using some \emph{buffer variables} to ``isolate'' the control variables from the free variables.
Assume that $\Maj_n = h(g_1, \ldots, g_m)$.
Suppose we split the variables into three sets $\Ifree, \Icontrol, \Ibuffer \subseteq [n]$ such that none of the inner functions $g_j$ queries variables from both $\Ifree$ and $\Icontrol$ (informally, $\Ibuffer$ acts as a buffer between $\Ifree$ and $\Icontrol$; see \Cref{fig:allowed-gj}).

\begin{figure}[h]
\centering
\begin{tabu}{c}
\begin{tikzpicture}[scale=0.4]
\draw (0,0) rectangle (12,1);
\draw (4,0) -- (4,1);
\draw (8,0) -- (8,1);
\node at (2,0.5) {$\Ifree$};
\node at (6,0.5) {$\Ibuffer$};
\node at (10,0.5) {$\Icontrol$};

\node[circle,draw,inner sep=2pt] (g) at (4,4) {$g_j$};

\draw (g) -- (1,1);
\draw (g) -- (2,1);
\draw (g) -- (3,1);
\draw (g) -- (5,1);
\draw (g) -- (6,1);
\draw (g) -- (7,1);
\end{tikzpicture}\\
OK\\
\end{tabu}
\begin{tabu}{c}
\begin{tikzpicture}[scale=0.4]
\draw (0,0) rectangle (12,1);
\draw (4,0) -- (4,1);
\draw (8,0) -- (8,1);
\node at (2,0.5) {$\Ifree$};
\node at (6,0.5) {$\Ibuffer$};
\node at (10,0.5) {$\Icontrol$};

\node[circle,draw,inner sep=2pt] (g) at (8,4) {$g_j$};

\draw (g) -- (5,1);
\draw (g) -- (6,1);
\draw (g) -- (7,1);
\draw (g) -- (9,1);
\draw (g) -- (10,1);
\draw (g) -- (11,1);
\end{tikzpicture}\\
OK\\
\end{tabu}
\begin{tabu}{c}
\begin{tikzpicture}[scale=0.4]
\draw (0,0) rectangle (12,1);
\draw (4,0) -- (4,1);
\draw (8,0) -- (8,1);
\node at (2,0.5) {$\Ifree$};
\node at (6,0.5) {$\Ibuffer$};
\node at (10,0.5) {$\Icontrol$};

\node[circle,draw,inner sep=2pt] (g) at (6,4) {$g_j$};

\draw (g) -- (1,1);
\draw (g) -- (2,1);
\draw (g) -- (3,1);
\draw (g) -- (9,1);
\draw (g) -- (10,1);
\draw (g) -- (11,1);
\end{tikzpicture}\\
not allowed!\\
\end{tabu}

\caption{The inner functions $g_j$ are allowed to query variables in both $\Ifree$ and $\Ibuffer$, or variables in both $\Ibuffer$ and $\Icontrol$, but not variables in both $\Ifree$ and $\Icontrol$.}
\label{fig:allowed-gj}
\end{figure}

Then if we fix the value of the variables in $\Ibuffer$, we can avoid a blow up in the number of times each variable is queried.
Let us informally see why.
There are two types of inner functions $g_j$:
\begin{itemize}
    \item if $g_j$ queries a free variable, then it does not query any control variables, so no blowup will happen;
    \item if $g_j$ does \emph{not} query any free variables, then it only depends on the control variables and the buffer variables, both of which are known, so in that case we can recompute the output of $g_j$ ourselves.
\end{itemize}
A bit more formally, let $x^{(\Icontrol \mapsto 0)}$ denote $x$ with all variables in $\Icontrol$ set to $0$.
Then we can recover $\Maj_n(x)$ if we know only $g_1(x^{(\Icontrol \mapsto 0)}), \ldots, g_m(x^{(\Icontrol \mapsto 0)})$ as well as the value of the variables in $\Icontrol$ and $\Ibuffer$.
Since $g_j(x^{(\Icontrol \mapsto 0)})$ does not depend on the value of the variables in $\Icontrol$, no blowup happens when we start to vary their values.

\paragraph*{Partial functions.}
Even with the help of buffer variables, though, it turns out we will not be able to have enough control variables to compute $\HW_n$ from $\Maj_n$.\footnote{Or more precisely, to compute $\HW_{n'}$ from $\Maj_n$ for some $n' = \Omega(n)$.}
We will only be able to compute the Hamming weight correctly on a fraction of the possible inputs.
Therefore we need to adapt our techniques \blue{from \Cref{sec:hw-lb}} to a ``partial functions'' setting, where $f(x) = |x|$ is only guaranteed when $x$ is in a subset $D \subseteq \zo^n$ of the possible inputs.

First, the following lemma is a generalization of our key lemma \Cref{lemma:main}, which gave a lower bound on the mutual information between $\bX_i$ and the inner function outputs $g_1(\bX), \ldots, g_m(\bX)$ when variable $i$ is queried few times.
What differs from \Cref{lemma:main} is that the result gives a nontrivial bound only when the probability $\Pr[\bX^{\oplus i} \not\in D]$
is small.
This probability measures how likely you are to leave $D$ if you start out in $D$ then flip the $i\th$ bit, so intuitively, the bigger $D$ is, the smaller it will be.
\begin{restatable}[key lemma, ``partial functions'' version]{lemma}{keylemma}
\label{lemma:main-partial}
\label{LEMMA:MAIN-PARTIAL}
Let $D \subseteq \zo^n$ be a non-empty subset of the hypercube (the ``domain'' on which $f$ computes the Hamming weight), and let
$f: \zo^n \to \set{0, \ldots, n}$ be a function such that $\forall x \in D$, $f(x) = |x|$.
Suppose that $f = h(g_1, \ldots, g_m)$ and that variable $i$ is queried \blue{by} at most $q$ of the inner functions $g_1, \ldots, g_m$.
Let $\bX \sim D$ be a random input drawn uniformly over $D$.
Then $\H\sqbcond{\bX_i}{g_1(\bX), \ldots, g_m(\bX)} \leq 1 + \Pr[\bX^{\oplus i} \not\in D] - 2^{-O(q)}$.
\end{restatable}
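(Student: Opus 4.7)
The plan is to reduce to conditioning on the inner functions that do not query $x_i$, use the bit-flip bijection to expose a Hamming-weight-level asymmetry in the distribution of $\bX_i$, and handle the boundary $\{x : x^{\oplus i}\notin D\}$ as an additive loss.

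Let $S\subseteq [m]$ be the indices of inner functions that query $x_i$, so $|S|\leq q$, and write $G_S(\bX)$, $G_{\bar S}(\bX)$ for the respective output tuples. Conditioning on $G_{\bar S}(\bX)=\alpha$, $\bX$ is uniform on $D_\alpha := \{x\in D : G_{\bar S}(x)=\alpha\}$. Since $f(\bX)=|\bX|$ on $D$ and $f=h(G_S,G_{\bar S})$, the pair $(G_S(\bX),\alpha)$ determines $|\bX|$; hence the set $W_\alpha$ of weights appearing in $D_\alpha$ has size at most $2^{|S|}\leq 2^q$, and $G_S$ is at least as informative about $\bX_i$ as $|\bX|$. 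This yields
\[
\H\sqbcond{\bX_i}{g_1(\bX),\ldots,g_m(\bX)} \;\leq\; \E_\alpha\, \H\sqbcond{\bX_i}{|\bX|,\, G_{\bar S}(\bX)=\alpha}.
\]
Writing $N_w^b = |\{x\in D_\alpha : |x|=w,\, x_i=b\}|$ and applying the Pinsker-type bound $H(p)\leq 1 - 2(p-\tfrac12)^2$ to the binary entropy at each weight, the right-hand expectation is at most $1 - \Theta(1)\cdot \E_\alpha\tfrac{1}{|D_\alpha|}\sum_{w\in W_\alpha}\tfrac{(N_w^0-N_w^1)^2}{N_w^0+N_w^1}$.

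The combinatorial heart of the proof is a lower bound on this chi-square-type sum. On the ``good'' part $D_\alpha^{\mathrm{good}}:=\{x\in D_\alpha : x^{\oplus i}\in D\}$, the flip $x\mapsto x^{\oplus i}$ is a bijection sending $(|x|,x_i=0)\mapsto(|x|+1,x_i=1)$, so defining $a_w := |\{x\in D_\alpha^{\mathrm{good}}:|x|=w,x_i=0\}| = |\{x\in D_\alpha^{\mathrm{good}}:|x|=w{+}1,x_i=1\}|$, the support $A=\{w:a_w>0\}$ has size at most $|W_\alpha|\leq 2^q$. Chaining Cauchy--Schwarz with the elementary total-variation bound $\sum_w|a_w-a_{w-1}|\geq 2\max_w a_w \geq 2(\sum_w a_w)/|A|$ gives
\[
\sum_w \tfrac{(a_w-a_{w-1})^2}{a_w+a_{w-1}} \;\geq\; \tfrac{(\sum_w|a_w-a_{w-1}|)^2}{2\sum_w a_w} \;\geq\; \tfrac{|D_\alpha^{\mathrm{good}}|}{|A|^2} \;\geq\; \tfrac{|D_\alpha^{\mathrm{good}}|}{2^{2q}}.
\]
In the boundary-free case ($D_\alpha^{\mathrm{good}}=D_\alpha$) this immediately yields $\H\sqbcond{\bX_i}{\cdots}\leq 1 - 2^{-O(q)}$, recovering the original \Cref{lemma:main}.

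The main obstacle will be the boundary: an adversary could arrange the boundary asymmetry $b_w^0-b_w^1$ (where $b_w^b := N_w^b - a_{w-\mathbf{1}[b=1]}$ collects contributions from $x\in D_\alpha$ with $x^{\oplus i}\notin D$) to cancel the clean signal $a_w-a_{w-1}$ and erase the bias. To absorb this loss, I would combine the identity $N_w^0-N_w^1 = (a_w-a_{w-1})+(b_w^0-b_w^1)$ with the perturbation inequality $(x+e)^2\geq \tfrac12 x^2 - e^2$ applied term by term, together with the denominator bound $N_w^0+N_w^1\geq a_w+a_{w-1}$. Careful but routine bookkeeping using $\sum_w(b_w^0+b_w^1) = B_\alpha := |D_\alpha\setminus D_\alpha^{\mathrm{good}}|$ then gives
\[
\tfrac{1}{|D_\alpha|}\sum_w\tfrac{(N_w^0-N_w^1)^2}{N_w^0+N_w^1} \;\geq\; 2^{-O(q)} - O\!\left(\tfrac{B_\alpha}{|D_\alpha|}\right).
\]
Averaging over $\alpha$ and noting $\E_\alpha[B_\alpha/|D_\alpha|] = \Pr[\bX^{\oplus i}\notin D]$ finishes the proof; when $\Pr[\bX^{\oplus i}\notin D]$ already exceeds $2^{-O(q)}$, the claimed bound is at least $1$ and holds vacuously since entropy is at most $1$.
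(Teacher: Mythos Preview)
Your overall strategy is correct and essentially the same as the paper's in its opening moves: both condition on $G_{\bar S}$, both exploit that the resulting fibre $D_\alpha$ supports at most $2^q$ Hamming weights, and both use the bit-flip pairing on $D_\alpha^{\mathrm{good}}$. Where you diverge is in how the bias is extracted. The paper picks out a \emph{single} weight $w^*$ by locating a large jump in the sequence $p_w=\Pr[\bX^{\oplus i}\in D,\,|\bX^{(i\mapsto 0)}|=w\mid \bX\in S_v]$ (using that a nonnegative sequence with $\le 2^{q+1}$ nonzero terms summing to nearly $1$ must contain a jump of size $\ge 2^{-O(q)}$), and then bounds $H_2$ at that single weight. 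You instead aggregate across all weights via the Pinsker bound $H_2(p)\le 1-2(p-\tfrac12)^2$ and a Cauchy--Schwarz/total-variation argument on the full $\chi^2$-type sum. Both routes yield the same $2^{-O(q)}$; yours is more systematic and arguably cleaner to average over $\alpha$, while the paper's single-weight argument is more elementary and avoids the quadratic-perturbation bookkeeping. The paper also handles the boundary differently, by first passing (via Markov) to ``good'' values of $v$ where $\Pr[\bX^{\oplus i}\notin D\mid \bX\in S_v]$ is at most twice the global average, rather than carrying an additive $O(B_\alpha/|D_\alpha|)$ through the estimate.

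One genuine technical wrinkle in your write-up: the ``denominator bound'' $N_w^0+N_w^1\ge a_w+a_{w-1}$ goes the wrong way for what you seem to want. After applying $(x+e)^2\ge\tfrac12 x^2-e^2$ you are left with $\tfrac12\sum_w \tfrac{(a_w-a_{w-1})^2}{N_w^0+N_w^1}$, and replacing $N_w^0+N_w^1$ by the smaller $a_w+a_{w-1}$ would \emph{increase} the fraction, which is not a valid lower bound. The fix is simple: skip the detour through the $a$-only $\chi^2$ sum and apply Cauchy--Schwarz directly with the true denominators, using
\[
\sum_w \frac{(N_w^0-N_w^1)^2}{N_w^0+N_w^1}\;\ge\;\frac{\bigl(\sum_w|N_w^0-N_w^1|\bigr)^2}{|D_\alpha|}
\quad\text{and}\quad
\sum_w|N_w^0-N_w^1|\;\ge\;\sum_w|a_w-a_{w-1}|-B_\alpha\;\ge\;\frac{|D_\alpha^{\mathrm{good}}|}{2^q}-B_\alpha,
\]
which cleanly gives $\tfrac{1}{|D_\alpha|}\sum_w\tfrac{(N_w^0-N_w^1)^2}{N_w^0+N_w^1}\ge 2^{-O(q)}$ whenever $B_\alpha/|D_\alpha|\le 2^{-q-2}$, and the claimed form $2^{-O(q)}-O(B_\alpha/|D_\alpha|)$ then follows by the trivial bound in the complementary case. (A second cosmetic point: your final inequality has $O(\Pr[\bX^{\oplus i}\notin D])$ rather than the lemma's bare $\Pr[\bX^{\oplus i}\notin D]$, but since $\H\le 1$ always, the statement with constant $1$ follows by absorbing the difference into the $2^{-O(q)}$ term when $\Pr$ is not already $\le 2^{-\Theta(q)}$.)
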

\blue{We postpone the proof of this lemma until \Cref{sec:proof-main}. For now,} let us see why it implies \Cref{lemma:main}.
\begin{proof}[Proof of \Cref{lemma:main} assuming \Cref{lemma:main-partial}]
Set $f \coloneqq \HW_n$.
Then $f(x) = |x|$ for all $x \in \zo^n$, so we can set $D \coloneqq \zo^n$.
This means that $\bX$ is a uniformly random input, and $\Pr[\bX^{\oplus i} \not\in D] = 0$.
Therefore, after rearranging, we obtain
\begin{align*}
    2^{-O(q)}
    &\leq 1 - \H\sqbcond{\bX_i}{g_1(\bX), \ldots, g_m(\bX)}\\
    &= \H\sqb{\bX_i} - \H\sqbcond{\bX_i}{g_1(\bX), \ldots, g_m(\bX)}\tag{because $\bX$ is uniformly random}\\
    &= \I\sqb{\bX_i : g_1(\bX), \ldots, g_m(\bX)}.\tag{by definition of mutual information}
\end{align*}
\end{proof}

We can also prove an analog of \Cref{corollary}, which gave a lower bound on the number of inner functions $m$ assuming that many variables are queried few times by the inner functions.
\begin{corollary}
\label{corollary:partial}
There is an integer constant $C>0$ such that the following holds.
Let $D \subseteq \zo^n$ be a non-empty subset of the hypercube, and let
$f: \zo^n \to \set{0, \ldots, n}$ be a function such that $\forall x \in D$, $f(x) = |x|$.
Suppose that $f = h(g_1, \ldots, g_m)$, that each variable is queried by at most $q>0$ of the inner functions $g_1, \ldots, g_m$, and that $\Pr_{\bX \sim D}[\bX^{\oplus i} \not \in D] \leq 2^{-Cq}$.
Then $m + (n-\log|D|) \geq n \cdot 2^{-O(q)}$.
\end{corollary}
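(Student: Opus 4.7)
The plan is to mirror the two-sided mutual information argument of \Cref{corollary}, with two adjustments to handle partiality: $\bX$ is now uniform on $D$ (not $\zo^n$), so $\H[\bX] = \log|D|$ and the coordinates $\bX_i$ need not be independent; and the per-coordinate bound now carries a slack term $\Pr[\bX^{\oplus i}\notin D]$, which the hypothesis on $D$ makes negligible for a suitable choice of $C$.

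Write $\bY \coloneqq (g_1(\bX), \ldots, g_m(\bX))$. The upper bound $\I\sqb{\bX:\bY} \leq m$ goes through verbatim: since $\bX$ determines $\bY$ and each $g_j$ is binary, $\I\sqb{\bX:\bY} = \H[\bY] \leq \sum_j \H[g_j(\bX)] \leq m$. For the lower bound, I use $\H[\bX] = \log|D|$ and subadditivity of conditional entropy to get
\[
\I\sqb{\bX:\bY} \;=\; \log|D| - \H\sqbcond{\bX}{\bY} \;\geq\; \log|D| - \sum_{i=1}^n \H\sqbcond{\bX_i}{\bY}.
\]
Now I would apply \Cref{lemma:main-partial} coordinate by coordinate. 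Writing its conclusion as $\H\sqbcond{\bX_i}{\bY} \leq 1 + \Pr[\bX^{\oplus i}\notin D] - 2^{-c_0 q}$ for some absolute constant $c_0$, I pick $C$ larger than $c_0$ (say $C = c_0+1$), so that the hypothesis $\Pr[\bX^{\oplus i}\notin D] \leq 2^{-Cq}$ forces $\H\sqbcond{\bX_i}{\bY} \leq 1 - \tfrac12 \cdot 2^{-c_0 q} = 1 - 2^{-O(q)}$. Summing over $i$ gives $\sum_i \H\sqbcond{\bX_i}{\bY} \leq n - n\cdot 2^{-O(q)}$, hence $\I\sqb{\bX:\bY} \geq \log|D| - n + n \cdot 2^{-O(q)}$. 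Combined with the upper bound $m \geq \I\sqb{\bX:\bY}$, this rearranges to $m + (n - \log|D|) \geq n \cdot 2^{-O(q)}$, as desired.

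The argument is essentially constant-tracking once \Cref{lemma:main-partial} is granted; that lemma does all of the genuine work. The only subtlety is the calibration of $C$: it exists precisely so that the ``boundary loss'' $\Pr[\bX^{\oplus i}\notin D]$ is dominated by the per-coordinate entropy gain $2^{-O(q)}$ supplied by the key lemma. I do not anticipate a genuine obstacle, since every step is dictated by the template of \Cref{corollary} and the only thing to check is that the two asymptotic constants line up.
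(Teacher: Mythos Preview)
Your proposal is correct and follows essentially the same approach as the paper's proof: both upper bound $\I[\bX:\bY]$ by $m$, lower bound it via subadditivity of conditional entropy and a coordinate-wise application of \Cref{lemma:main-partial}, and choose $C$ to be one more than the constant hidden in the lemma's $2^{-O(q)}$ so that the boundary term $\Pr[\bX^{\oplus i}\notin D]$ is absorbed. The only cosmetic difference is that the paper keeps the two terms $2^{-C'q}$ and $2^{-(C'+1)q}$ separate until the final line, whereas you combine them immediately into $\tfrac12\cdot 2^{-c_0 q}$.
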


\begin{proof}
Let $C'$ be a possible integer value for the constant in the $O(\cdot)$ of \Cref{lemma:main-partial}.
We will prove the corollary for $C \coloneqq C'+1$.
The proof works in the same way as the proof of \Cref{corollary}: we will upper and lower bound the quantity $\I\sqb{\bX:g_1(\bX),\ldots, g_m(\bX)}$.
On the one hand,
\begin{align*}
\I\sqb{\bX:g_1(\bX), \ldots, g_m(\bX)}
&= \H\sqb{g_1(\bX), \ldots, g_m(\bX)} - \H\sqbcond{g_1(\bX), \ldots, g_m(\bX)}{\bX}\\
&= \H[g_1(\bX), \ldots, g_m(\bX)]\tag{because $\bX$ determines $g_j(\bX)$ completely}\\
&\leq \H[g_1(\bX)] + \ldots + \H[g_m(\bX)]\tag{because $\H[\cdot]$ is subadditive}\\
&\leq m.\tag{because each $g_j$ has a binary output}
\end{align*}
On the other hand,
\begin{align*}
\I\sqb{\bX:g_1(\bX), \ldots, g_m(\bX)}
&= \H\sqb{\bX} - \H\sqbcond{\bX}{g_1(\bX), \ldots, g_m(\bX)}\\
&\geq \H\sqb{\bX} - \sum_{i\in[n]}\H\sqbcond{\bX_i}{g_1(\bX), \ldots, g_m(\bX)}\tag{because $\H\sqbcond{\cdot}{\bY}$ is subadditive for any $\bY$}\\
&\geq \H\sqb{\bX} - \sum_{i\in[n]}\p*{1+\Pr[\bX^{\oplus i} \not\in D] - 2^{-C'q}}\tag{by \Cref{lemma:main-partial}}\\
&\geq \H\sqb{\bX} - n + \sum_{i\in[n]}\p*{2^{-C'q} - 2^{-(C'+1)q}}\tag{by assumption that $\Pr_{\bX \sim D}[\bX^{\oplus i} \not \in D] \leq 2^{-Cq}$}\\
&= \log|D| - n + \sum_{i\in[n]}2^{-O(q)}.\tag{because $\bX$ is uniform on $D$, and because $q>0$}
\end{align*}
The corollary follows from combining these two bounds.
\end{proof}

\subsection{A lower bound for majority}
With all the tools in hand, let us prove a lower bound on the composition for $\Maj_n$ by reducing it to a partial version of Hamming weight.
\begin{theorem}[composition complexity of $\Maj_n$]
$\CC_k(\Maj_n) \geq \Omega(\frac{n}{k} \cdot \min(\log k, \log \frac{n}{k}))$.
\end{theorem}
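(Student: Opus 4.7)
The plan is to reduce computing $\Maj_n$ to computing a partial Hamming weight function on a large domain, and then invoke \Cref{corollary:partial}. Write $\Maj_n = h(g_1, \ldots, g_m)$ with each $g_j$ being $k$-local, and (by \Cref{cor:wlog}) assume each variable is queried by at most $q \coloneqq mk/n$ inner functions; the goal is to show $q = \Omega(\min(\log k, \log(n/k)))$. If $q \geq c\log(n/k)$ for a small absolute constant $c$ fixed below, there is nothing to prove, so I assume $q < c\log(n/k)$ henceforth. I would first set up the partition via a neighborhood construction: pick any $\Icontrol \subseteq [n]$ of size $n_c \coloneqq \lceil 2^{Cq+1}\rceil$, where $C$ is the constant from \Cref{corollary:partial}; let $\Ibuffer$ be the set of variables outside $\Icontrol$ that co-occur with some variable of $\Icontrol$ in some inner function; and let $\Ifree = [n] \setminus (\Icontrol \cup \Ibuffer)$. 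Since each variable in $\Icontrol$ is queried by at most $q$ inner functions of arity at most $k$, we have $|\Ibuffer| \leq n_c qk$, and the assumption on $q$ (for $c$ sufficiently small) guarantees $|\Ifree| \geq n/2$.

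I would then fix the buffer variables so that exactly $|\Ibuffer|/2$ of them are set to $1$, making $\Maj_n(\xfree, b, c) = 1$ iff $|\xfree| + |c| \geq (\nfree+n_c)/2$. Sweeping $|c|$ across $\{0, 1, \ldots, n_c\}$ then determines $|\xfree|$ exactly whenever $\xfree \in D \coloneqq \{x \in \zo^\nfree : (\nfree-n_c)/2 \leq |x| \leq (\nfree+n_c)/2\}$. Crucially, no inner function $g_j$ queries both $\Ifree$ and $\Icontrol$, so each probe $\Maj_n(\xfree, b, c)$ is determined by $\xfree$ through only the $n_A \leq m$ functions that query at least one variable in $\Ifree$ (the others become evaluable once $b$ and $c$ are known). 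This yields a $k$-local composition $f = h'(g_1^A, \ldots, g_{n_A}^A)$ for the partial Hamming weight $f : \zo^\nfree \to \{0, \ldots, \nfree\}$ agreeing with $|\cdot|$ on $D$, with every variable in $\Ifree$ still queried at most $q$ times.

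Finally, I would apply \Cref{corollary:partial}: standard binomial tail estimates give $\nfree - \log|D| = O(\log\nfree)$, and the boundary probability is at most $1/n_c \leq 2^{-Cq}$ by choice of $n_c$. The corollary then yields $m + O(\log n) \geq \nfree \cdot 2^{-O(q)} = \Omega(n \cdot 2^{-O(q)})$, whence $m \geq \Omega(n \cdot 2^{-O(q)})$. Substituting $q = mk/n$ produces the self-referential inequality $mk/n \geq \Omega(\log(n/m))$, which unfolds to $m \geq \Omega((n/k)\log k)$; together with the opposite case $q \geq c\log(n/k)$, this gives the desired $\Omega((n/k)\min(\log k, \log(n/k)))$. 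The main obstacle will be balancing the tension between a large $\Icontrol$ (needed for the boundary probability condition, forcing $n_c \geq 2^{\Omega(q)}$) and a large $\Ifree$ (needed for the bound to be of order $n$); this tension caps $q$ at $O(\log(n/k))$ in the partition scheme and naturally produces the $\log(n/k)$ factor in the $\min$.
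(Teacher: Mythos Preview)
Your proposal is correct and follows essentially the same route as the paper's proof: partition into free/buffer/control variables with $|\Icontrol| = 2^{\Theta(q)}$, fix the buffer to center the threshold, sweep $\Icontrol$ to compute a partial Hamming weight on $\Ifree$, and apply \Cref{corollary:partial}. The only rough edges are bookkeeping (the boundary probability is $\sim 2/n_c$ rather than $1/n_c$, and dropping the $O(\log n)$ term requires explicitly invoking $q < c\log(n/k)$ to ensure $n\cdot 2^{-O(q)} \gg \log n$), but these are easily patched and do not affect the argument.
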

\begin{proof}
Suppose that $\Maj_n = h(g_1, \ldots, g_m)$ where each of the inner functions $g_j$ is $k$-local.
By \Cref{cor:wlog}, it is enough to show that $m \geq \Omega(\frac{n}{k} \cdot \min(\log k, \log \frac{n}{k}))$ under the assumption that each variable is queried at most $q \coloneqq \frac{mk}{n}$ times.
To do this, we will show that $q \geq \Omega(\min(\log k, \log \frac{n}{k}))$.

The plan of the proof is as follows.
We will start by defining the sets of variables $\Ifree$, $\Ibuffer$ and $\Icontrol$.
We will make sure that $|\Ifree| = \Omega(n)$, and we will play with the variables in $\Icontrol$ to show that the inner functions $g_1, \ldots, g_m$ must compute the Hamming weight over a large subset $D$ of $\zo^{\Ifree}$.
From there, we will apply \Cref{corollary:partial}, and only asymptotic calculations will remain, similar to the proof of \Cref{thm:hamming-weight}.

\paragraph*{Defining $\Ifree$, $\Ibuffer$ and $\Icontrol$.}
Let $C$ be the constant in \Cref{corollary:partial}.
Assume that $2^{2Cq+1} qk \leq n/2$ (otherwise, $q = \Omega(\log \frac{n}{k})$ and we are done).
Let $\Icontrol$ be any $2^{2Cq+1}$-element subset of $[n]$, let $J \subseteq [m]$ be the set of inner functions that query some variable in $\Icontrol$, and let $I'\subseteq [n]$ be the set of all variables queried by at least one of the inner functions in $J$.
That is, let $I'$ be the set of variables that are queried by some inner function $g_j$ that also queries a variable in $\Icontrol$.

Since each variable in $\Icontrol$ is queried by at most $q$ inner functions, and each inner function queries at most $k$ variables, we have $|I'| \leq |I| \cdot q \cdot k = 2^{2Cq+1} qk \leq n/2$.
In case $|I'| < 2|\Icontrol|+1$, extend $I'$ to include some more variables of $[n]$ until it reaches size $2|\Icontrol|+1$.
After this, we still have $|I'| \leq \max(n/2, 2|\Icontrol|+1) = \max(n/2, 2\cdot 2^{2Cq+1}+1) \leq n/2$.

\begin{figure}[h]
\centering 

\begin{tikzpicture}[scale=0.4, font=\small]
\draw (0,0.5) rectangle (14,8.5);
\node[above] at (7,8.5) {$[n]$};

\draw (10,4) circle (3);
\node[above] at (10,7) {$I'$};

\draw (10,5) circle (1.5);
\node at (10,5) {$\Icontrol$};
\node at (10,2.5) {$\Ibuffer$};
\node at (3.5,4.25) {$\Ifree$};
\end{tikzpicture}

\caption{A schematic view of the sets of variables $\Ifree$, $\Ibuffer$, and $\Icontrol$.}
\label{fig:free-buffer-control}
\end{figure}

Let $\Ifree \coloneqq [n] \setminus I'$, and let $\Ibuffer \coloneqq  I' \setminus \Icontrol$ (see \Cref{fig:free-buffer-control}).
Note that $\Ifree \cup \Ibuffer \cup \Icontrol = [n]$.
Since $\Ifree$ contains no element of $I'$, it is clear that no inner function queries both variables in $\Ifree$ and $\Icontrol$.
Also, defining $\nfree \coloneqq |\Ifree|$, we have
\begin{equation}
    \label{eq:nfree}
    \nfree = n - |I'| \geq n - n/2 = n/2.
\end{equation}

\paragraph*{Computing a partial version of $\HW_{\nfree}$ from $\Maj_n$.}
We will split the input into three parts $x = \xfree \circ \xbuffer \circ \xcontrol$, where $\xfree \in \zo^{\Ifree}$, $\xbuffer \in \zo^{\Ibuffer}$ and $\xcontrol \in \zo^{\Icontrol}$.
First, let us fix a value for $\xbuffer$.
We will fix $\xbuffer$ to be some vector with exactly $|\xbuffer| = \ceil{n/2}-\ceil{\nfree/2} - 2^{2Cq}$ ones.
To make sure this is possible, we need to show $0 \leq \ceil{n/2}-\ceil{\nfree/2} - 2^{2Cq} \leq |\Ibuffer|$.
Firstly, we have that
$n-\nfree = |\Ibuffer| + |\Icontrol| \geq |\Icontrol| = 2^{2Cq+1}$,
so $\ceil{n/2}-\ceil{\nfree/2} \geq \floor*{\frac{n - \nfree}{2}} \geq \floor*{\frac{2^{2Cq+1}}{2}} = 2^{2Cq}$, which gives the lower bound.
Secondly, we have $|\Ibuffer| \geq |I'| - |\Icontrol| \geq |\Icontrol|+1$ (since we made sure that $|I'| \geq 2|\Icontrol|+1$) and thus
\[
|\Ibuffer| \geq \frac{|\Ibuffer|+|\Icontrol|+1}{2} = \frac{n-|\Ifree|+1}{2} = \frac{n-\nfree+1}{2} \geq \ceil{n/2} - \ceil{\nfree/2},
\]
which gives the upper bound.

Since each inner function $g_j(x)$ is either completely determined by $\xfree$ and $\xbuffer$ or completely determined by $\xbuffer$ and $\xcontrol$, in order to compute $\Maj_n(x)$, it is enough to know $g_j(\xfree \circ \xbuffer \circ 0^{\Icontrol})$ for all $j \in [m]$ (where $0^{\Icontrol} \in \zo^{\Icontrol}$ is the all zeros string) as well as the value of $\xbuffer$ and $\xcontrol$.
So even if $g_j(\xfree \circ \xbuffer \circ 0^{\Icontrol})$ for $j \in [m]$ is all we know about $\xfree$, we can compute the value of $\Maj_n(\xfree \circ \xbuffer \circ \xcontrol)$ for any $\xcontrol$ we desire.
Thus, setting $|\xcontrol|$ between $0$ and $|\Icontrol|=2^{2Cq+1}$, we can figure out:
\begin{itemize}
    \item whether $|\xfree| + |\xbuffer| + 0 \geq \ceil{n/2}$;
    \item whether $|\xfree| + |\xbuffer| + 1 \geq \ceil{n/2}$;
    \item \ldots
    \item whether $|\xfree| + |\xbuffer| + 2^{2Cq+1} \geq \ceil{n/2}$.
\end{itemize}
Given that we set $|\xbuffer| = \ceil{n/2} - \ceil{\nfree/2} - 2^{2Cq}$, this means we can distinguish between the following cases:
\begin{itemize}
    \item $|\xfree| \geq \ceil{\nfree/2} + 2^{2Cq}$;
    \item $|\xfree| = \ceil{\nfree/2}+2^{2Cq}-1$;
    \item \ldots
    \item $|\xfree| = \ceil{\nfree/2}-2^{2Cq}$;
    \item $|\xfree| < \ceil{\nfree/2}-2^{2Cq}$.
\end{itemize}
Therefore, we can form a function $f(\xfree) = h'(g_1(\xfree \circ \xbuffer \circ 0^{\Icontrol}), \ldots, g_m(\xfree \circ \xbuffer \circ 0^{\Icontrol}))$ such that
\[
f(\xfree) = 
\begin{cases}
    \ceil{\nfree/2} + 2^{2Cq} & \text{if $|\xfree| \geq \ceil{\nfree/2} + 2^{2Cq}$}\\
    \ceil{\nfree/2} + 2^{2Cq}-1 & \text{if $|\xfree| = \ceil{\nfree/2} + 2^{2Cq}-1$}\\
    \ldots &\\
    \ceil{\nfree/2}-2^{2Cq} & \text{if $|\xfree| = \ceil{\nfree/2}-2^{2Cq}$}\\
    \ceil{\nfree/2}-2^{2Cq}-1 & \text{if $|\xfree| < \ceil{\nfree/2}-2^{2Cq}$.}
\end{cases}
\]

\paragraph*{Applying \Cref{corollary:partial}.}
This function $f$ computes the Hamming weight correctly on the set
\[
D = \setcond*{\xfree \in \zo^{\Ifree}}{\ceil{\nfree/2}-2^{2Cq}-1 \leq |\xfree| \leq \ceil{\nfree/2} + 2^{2Cq}}.
\]
We now prepare to apply \Cref{corollary:partial} to $f$.
Clearly, each input variable of $f$ is queried by at most $q$
of the inner functions $g_1(\xfree \circ \xbuffer \circ 0^{\Icontrol}), \ldots, g_m(\xfree \circ \xbuffer \circ 0^{\Icontrol})$.
Let us see check that $D$ satisfies the condition of \Cref{corollary:partial}.
Observe that $|\bX^{\oplus i}| = |\bX|\pm 1$, so when $\ceil{\nfree/2}-2^{2Cq} \leq |\bX| < \ceil{\nfree/2}+2^{2Cq}$, we have $\bX^{\oplus i} \in D$. Therefore,
\begin{align*}
    \Pr_{\bX \sim D}[\bX^{\oplus i} \not\in D]
    &\leq \Pr_{\bX \sim D}[|\bX| = \ceil{\nfree/2} - 2^{2Cq} - 1 \vee |\bX| = \ceil{\nfree/2} + 2^{2Cq}]\\%\tag{because when , $\bX^{\oplus i}$ is definitely in $D$}\\
    &\leq \frac{2}{\#%
    \set{\ceil{\nfree/2}-2^{2Cq}-1, \ldots, \ceil{\nfree/2} + 2^{2Cq}}}\tag{because the further $|\bX|$ is from $\nfree/2$, the fewer possible values there are for $\bX$}\\
    &= \frac{2}{2^{2Cq+1}+2}\\
    &\leq 2^{-2Cq}.
\end{align*}
Thus we can apply \Cref{corollary:partial} and obtain
\begin{equation}
    \label{eq:from-corollary}
    m + (\nfree - \log |D|) \geq \nfree \cdot 2^{-O(q)}.
\end{equation}

Now, we also have
\begin{align*}
    \log|D|
    &\geq \log\abs*{\setcond{\xfree \in \zo^{\Ifree}}{|\xfree| = \ceil{\nfree/2}}}\nonumber\\
    &\geq \log(\Omega(2^{\nfree}/\sqrt{\nfree}))\nonumber\\
    &= \nfree - O(\log \nfree),
\end{align*}
so $\nfree - \log |D| = O(\log \nfree) = O(\log n)$.
Plugging this into \eqref{eq:from-corollary}, we get
\begin{equation}
    \label{eq:before-asymptotics}
    m + O(\log n) \geq \nfree \cdot 2^{-O(q)}.
\end{equation}

\paragraph*{Asymptotics and conclusion.}
Since each inner function $g_1, \ldots, g_m$ queries $k$ variables and each variable is queried at most $q$ times, we have
$mk \leq nq$. %
Therefore,
\begin{align*}
    \frac{nq}{k} + O(\log n)
    &\geq m + O(\log n)\\%\tag{by \eqref{eq:m-vs-q} and \eqref{eq:nfree-logd}}\\
    &\geq \nfree \cdot 2^{-O(q)}\tag{by \eqref{eq:before-asymptotics}}\\
    &\geq \frac{n}{2} \cdot 2^{-O(q)}.\tag{by \eqref{eq:nfree}}
\end{align*}
Rearranging, we get $2^{O(q)} \p*{\frac{q}{k} + \frac{O(\log n)}{n}} \geq 1/2$, so either
\[\frac{2^{O(q)}q}{k} \geq 1/4 \Rightarrow q = \Omega(\log k)\]
or
\[\frac{2^{O(q)}O(\log n)}{n} \geq 1/4 \Rightarrow q = \Omega(\log n) \geq \Omega(\log k).\qedhere\]
\end{proof}

\section{Proof of~\Cref{lemma:main-partial}}
\label{sec:proof-main}
In this section we present the proof of our key lemma \Cref{lemma:main-partial}, which is a generalization of \Cref{lemma:main} to functions that only compute the Hamming weight correctly on a subset of the inputs.

\keylemma*

We will start by proving a special case to build intuition, then prove the full version using a similar approach.

\subsection{Warmup: $q=1$ for total functions}
In preparation for proving \Cref{lemma:main-partial} in its full generality, let us prove the special case $q=1$ of \Cref{lemma:main} (the ``total functions'' version), which is simpler and illustrates the core idea quite well.
\begin{proposition}[``baby version'' of \Cref{lemma:main}]
Suppose that $\HW_n = h(g_1, \ldots, g_m)$ and that variable $i$ is queried by only one of the inner functions $g_1, \ldots, g_m$.
Let $\bX \sim \{0,1\}^n$ be a uniformly random input.
Then $\I\sqb{\bX_i:g_1(\bX), \ldots, g_m(\bX)} = 1$.
That is, the outputs of the inner functions $g_1(\bX), \ldots, g_m(\bX)$ determine $\bX_i$ completely.
\end{proposition}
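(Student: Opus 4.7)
The plan is to show directly that $\bX_i$ is completely determined by the tuple $(g_1(\bX),\ldots,g_m(\bX))$; since $\bX_i$ is a uniform bit, this yields $\I[\bX_i:g_1(\bX),\ldots,g_m(\bX)] = \H[\bX_i] = 1$. After relabeling, I would assume $g_1$ is the unique inner function depending on coordinate $i$, so that $g_j(x) = g_j(x^{\oplus i})$ for every $x$ and every $j \geq 2$.

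The key step is a \emph{sensitivity upgrade}: I claim $g_1(x^{\oplus i}) \neq g_1(x)$ for \emph{every} $x \in \zo^n$, not merely for some witness showing that $g_1$ depends on $x_i$. If equality held for some $x$, then the entire output tuple would agree on $x$ and $x^{\oplus i}$, so $h$ would assign them the same value, forcing $\HW_n(x) = \HW_n(x^{\oplus i})$---but these Hamming weights differ by exactly $1$, contradiction. Equivalently, $g_1$ takes the form $g_1(x) = x_i \oplus g_1'(x_{-i})$ for some function $g_1'$.

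To conclude I would run a level-set argument. Fix any achievable output tuple $v = (v_1,\ldots,v_m)$, let $v' := (1-v_1,v_2,\ldots,v_m)$, and let $L_v,L_{v'} \subseteq \zo^n$ be their preimages under $x \mapsto (g_1(x),\ldots,g_m(x))$. The sensitivity upgrade guarantees that $x\mapsto x^{\oplus i}$ bijects $L_v$ with $L_{v'}$. Suppose for contradiction that $L_v$ contained some $x$ with $x_i = 0$ and some $y$ with $y_i = 1$. Then both $x^{\oplus i}$ and $y^{\oplus i}$ lie in $L_{v'}$, yet $|x^{\oplus i}| = |x| + 1 = h(v) + 1$ while $|y^{\oplus i}| = |y| - 1 = h(v) - 1$, so two members of $L_{v'}$ have distinct Hamming weights---contradicting that $\HW_n$ must send all of $L_{v'}$ to the single value $h(v')$. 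Hence $\bX_i$ is constant on every level set, as required.

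The main obstacle is spotting the sensitivity upgrade; without it one is tempted to reason only about a single ``flippy'' input witnessing that $g_1$ depends on $x_i$, which is not enough to pair up the level sets as above. Once that observation is in hand, the weight-mismatch contradiction is essentially forced, and no information-theoretic machinery beyond $\H[\bX_i] = 1$ is needed.
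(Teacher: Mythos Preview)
Your proof is correct and rests on the same core observation as the paper's: since $g_2,\ldots,g_m$ do not see coordinate $i$, the inputs $x$ and $x^{\oplus i}$ agree on those outputs, and their Hamming weights differ by one. The paper packages this slightly more directly---it lets $W = \{h(0,g_2(x),\ldots,g_m(x)),\,h(1,g_2(x),\ldots,g_m(x))\}$, observes that both $|x|$ and $|x^{\oplus i}|$ lie in this two-element set so $W=\{w,w+1\}$, and then reads off $x_i$ by checking whether $|x|=w$ or $|x|=w+1$---whereas you first isolate the ``sensitivity upgrade'' and then run a level-set contradiction; but these are cosmetic rearrangements of the same argument.
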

\begin{proof}
Say without loss of generality that the only inner function which queries variable $i$ is $g_1$.
Fix any input $x$.
We will show how to recover $x_i$ from the values $g_1(x),\ldots, g_m(x)$.

First, set aside $g_1(x)$, and consider what values $|x|$ could take if we knew only $g_2(x), \ldots, g_m(x)$.
By our assumption $\HW_n = h(g_1, \ldots, g_m)$, $|x| = h(g_1(x), \ldots, g_m(x))$, so $|x|$ must belong to the set
\[
W = \set{h(0, g_2(x), \ldots, g_m(x)), h(1, g_2(x), \ldots, g_m(x))}.
\]

Now, consider the input $x^{\oplus i}$ ($x$ with its $i\th$ coordinate flipped).
Since only $g_1$ depends on the $i\th$ coordinate, $x^{\oplus i}$ must share the same output values for $g_2, \ldots g_m$.
This means that \emph{both} Hamming weights $|x|$ and $|x^{\oplus i}|$ must belong to the set $W$.
But we know that $|x^{\oplus i}|$ is equal to either $|x| + 1$ (when $x_i = 0$) or $|x| - 1$ (when $x_i = 1$), and $|W| \leq 2$, so that means that $W$ must be of the form
\[
W = \set{|x|, |x^{\oplus i}|} = \set{w, w+1}
\]
for some $w$.

So to find the value of $x_i$, it suffices to check whether $|x| = w$ (in which case $x_i = 0$) or $|x| = w+1$ (in which case $x_i = 1$).
Since both $W$ and $|x| = h(g_1(x), \ldots, g_m(x))$ can easily be computed from $g_1(x), \ldots, g_m(x)$, this gives us a way to recover $x_i$ from $g_1(x), \ldots, g_m(x)$.
\end{proof}

\subsection{General case}
The proof of \Cref{lemma:main-partial} is a generalization of the trick above: we fix the outputs of the inner functions that \emph{do not} query variable $i$,
then consider the (relatively small) set of possible values for $|\bX|$ given those outputs, and finally use our knowledge of $|\bX|$ to guess $\bX_i$ better than random chance.

\begin{proof}[Proof of \Cref{lemma:main-partial}]
First, let us show that the inequality holds when $q = 0$.
If $q=0$, then $f$ does not depend on variable $i$ at all, so for any $x \in \set{0,1}^n$, $f(x) = f(x^{\oplus i})$.
This means we cannot simultaneously have $f(x) = |x|$ and $f(x^{\oplus i}) = |x^{\oplus i}|$.
Thus whenever $x \in D$, $x^{\oplus i}$ cannot also be in $D$.
As a result, $\Pr[\bX^{\oplus i} \not\in D] = 1$, so the right hand side becomes
\[1 + \Pr[\bX^{\oplus i} \not\in D] - 2^{-O(q)} = 1 + 1 - 1 = 1,\]
and the inequality is trivially verified.
Having proved the inequality for $q=0$, we may assume that $q \geq 1$ for the remainder of the proof.

Let $J_i \subseteq [m]$ be the set of inner functions that depend on variable $i$ (so $|J_i| \leq q$), and let $\Jc \coloneqq [m] \setminus J_i$.
Using the fact that $g_1(\bX), \ldots, g_m(\bX)$ determine $|\bX|$ completely, we can weaken the conditioning to get
\[
\H\sqbcond{\bX_i}{g_1(\bX), \ldots, g_m(\bX)}
\leq \H\sqbcond{\bX_i}{\set{g_j(\bX)}_{j \in \Jc}, |\bX|},
\]
and by definition of conditional entropy,
\begin{align*}
&\H\sqbcond{\bX_i}{\set{g_j(\bX)}_{j \in \Jc}, |\bX|}\\
&\qquad= \E_{\bY \sim D}[-\log\Pr\sqbcond{\bX_i=\bY_i}{(\forall j \in \Jc, g_j(\bX)=g_j(\bY))\wedge|\bX|=|\bY|}]\\
&\qquad= \E_{\bY \sim D}[H_2(\Pr\sqbcond{\bX_i=1}{(\forall j \in \Jc, g_j(\bX)=g_j(\bY))\wedge|\bX|=|\bY|})],
\end{align*}
where $H_2(p) \coloneqq -p\log p - (1-p)\log(1-p)$ is the binary entropy function.
Thus, to prove the lemma, it suffices to show that
\begin{equation}
    \label{eq:really-want}
    \E_{\bY \sim D}[H_2(\Pr\sqbcond{\bX_i=1}{(\forall j \in \Jc, g_j(\bX)=g_j(\bY))\wedge|\bX|=|\bY|})] \leq 1+\Pr[\bX^{\oplus i} \not\in D]-2^{-O(q)}.
\end{equation}

\paragraph*{Partitioning $D$ according to the outputs of the inner functions that do not query the variable $i$.} To prove this inequality, let us first split into cases according to the outputs of the inner functions that do not query variable $i$.
Fix the output values $v \in \zo^\Jc$ of the inner functions that do not query variable $i$, and suppose that those output values are achievable by some input in $D$ (that is, there exists some input $x \in D$ such that $\forall j \in \Jc, g_j(x) = v_j$).
Let $S_v \coloneqq \setcond{x \in \set{0,1}^n}{\forall j \in \Jc, g_j(x) = v_j}$ be the set of inputs that produce these inner function outputs.
Note that the sets $D \cap S_v$ form a partition of $D$.
As we will see later, a key property of $S_v$ is that its image under $f$ has cardinality at most $2^{q}$.

Let us call $v$ \emph{good} if $\Pr\sqbcond{\bX^{\oplus i} \not\in D}{\bX \in S_v} \leq 2\cdot \Pr[\bX^{\oplus i} \not\in D]$ (that is, $v$ is good if $\bX^{\oplus i}$ is not much more likely to lie outside of $D$ when $\bX$ is in $S_v$ than on average).
By Markov's inequality, the sets $S_v$ with good $v$ account for most of the probability mass: that is, $\sum_{\text{$v$ good}} \Pr[\bX \in S_v] \geq 1/2$.

Our objective will be to show that for every good $v$,
\begin{equation}
\label{eq:want}
\E_{\bY \sim D}\sqbcond*{H_2(\Pr\sqbcond{\bX_i=1}{\bX \in S_v \wedge|\bX|=|\bY|})}{\bY \in S_v}
\leq 1 + \Pr[\bX^{\oplus i} \not \in D] - 2^{-O(q)}.
\end{equation}
Informally, the task is: for any good $v \in \zo^\Jc$ (representing the outputs of the inner functions that do not query variable $i$), based only on $v$ and the Hamming weight $|\bX|$, guess the value of $\bX_i$ slightly better than random chance.

Let us see why \eqref{eq:want} implies \eqref{eq:really-want}:
\begin{align*}
    &\E_{\bY \sim D}[H_2(\Pr\sqbcond{\bX_i=1}{(\forall j \in \Jc, g_j(\bX)=g_j(\bY))\wedge|\bX|=|\bY|})]\\
    &\qquad= \sum_v \Pr[\bX \in S_v] \cdot \E_{\bY \sim D}\sqbcond*{H_2(\Pr\sqbcond{\bX_i=1}{\bX \in S_v \wedge|\bX|=|\bY|})}{\bY \in S_v}\tag{by the law of total expectation}\\
    &\qquad\leq \sum_v \Pr[\bX \in S_v] \cdot
    \begin{cases}
    1 + \Pr[\bX^{\oplus i} \not \in D] - 2^{-O(q)} & \text{if $v$ is good}\\
    1 & \text{otherwise}
    \end{cases}\tag{by \eqref{eq:want} and because $H_2(\cdot) \leq 1$}\\
    &\qquad= 1 + \p*{\sum_\text{$v$ good} \Pr[\bX \in S_v]} \p*{\Pr[\bX^{\oplus i} \not \in D] - 2^{-O(q)}}\\
    &\qquad\leq 1 + \Pr[\bX^{\oplus i} \not \in D] - \p*{\sum_\text{$v$ good} \Pr[\bX \in S_v]}2^{-O(q)}\tag{because $\sum_\text{$v$ good} \Pr[\bX \in S_v] \leq 1$}\\
    &\qquad\leq 1 + \Pr[\bX^{\oplus i} \not \in D] - 2^{-O(q)}.\tag{because $\sum_\text{$v$ good} \Pr[\bX \in S_v] \geq 1/2$ and $q > 0$}
\end{align*}

\paragraph*{Proof overview for~\eqref{eq:want}.} 
Fix some good $v \in \zo^\Jc$, and let $W_v \coloneqq \setcond{|x|}{x \in D \cap S_v}$ be the set of possible Hamming weights for inputs in $D \cap S_v$.
When $x \in D$, we know that $|x| = f(x) = h(g_1(x), \ldots, g_m(x))$, and when in addition $x \in S_v$, then for all $j \in \Jc$, the output value $g_j(x)$ is already fixed to $v_j$.
Therefore, when $x \in D \cap S_v$, the Hamming weight $|x|$ can only depend on the remaining $q$ output values $g_j(x)$ for $j \in J_i$, so there are at most $2^{q}$ possible values for $|x|$.
In other words, $|W_v| \leq 2^{q}$.

Informally, using the fact that $W_v$ is small, we will show that there exists a Hamming weight $\myw \in W_v$ that occurs with significant probability within $S_v$, and such that conditioned on $|\bX|=\myw$, the probability that $\bX_i=1$ is not too close to $1/2$.
This means that when $|\bX|=\myw$ we can guess $\bX_i$ slightly better than random chance, and even if we just guess randomly when $|\bX| \neq \myw$, we will have achieved better-than-random-chance accuracy overall.
Formally, we will show that $\exists \myw \in W_v$ such that
\begin{equation}
\label{eq:significant-prob}
\Pr\sqbcond{|\bX|=\myw}{x \in S_v} \geq 2^{-O(q)}
\end{equation}
and
\begin{equation}
\label{eq:away-from-half}
\Pr\sqbcond{\bX_i=1}{\bX \in S_v \wedge |\bX|=\myw} \leq \frac{1}{2} - 2^{-O(q)}.
\end{equation}
Let us see why \eqref{eq:significant-prob} and~\eqref{eq:away-from-half} together imply \eqref{eq:want}:
\begin{align*}
&\E_{\bY \sim D}\sqbcond*{H_2(\Pr\sqbcond{\bX_i=1}{\bX \in S_v \wedge|\bX|=|\bY|})}{\bY \in S_v}\\
&\qquad=\sum_{w \in W_v}\Pr\sqbcond{|\bX|=w}{\bX \in S_v}\\
&\qquad\qquad\qquad\cdot \E_{\bY \sim D}\sqbcond*{H_2(\Pr\sqbcond{\bX_i=1}{\bX \in S_v \wedge|\bX|=|\bY|})}{\bY \in S_v \wedge |\bY| = w}\tag{by the law of total expectation}\\
&\qquad=\sum_{w \in W_v}\Pr\sqbcond{|\bX|=w}{\bX \in S_v}\\
&\qquad\qquad\qquad\cdot \E_{\bY \sim D}\sqbcond*{H_2(\Pr\sqbcond{\bX_i=1}{\bX \in S_v \wedge|\bX|=w})}{\bY \in S_v \wedge |\bY| = w}\tag{replace $|\bY|$ by $w$ using the conditioning}\\
&\qquad=\sum_{w \in W_v}\Pr\sqbcond{|\bX|=w}{\bX \in S_v}\cdot H_2(\Pr\sqbcond{\bX_i=1}{\bX \in S_v \wedge|\bX|=w})\tag{the expectation was constant}\\
&\qquad\leq\sum_{w \in W_v}\Pr\sqbcond{|\bX|=w}{\bX \in S_v}\cdot
\begin{cases}
H_2(\Pr\sqbcond{\bX_i=1}{\bX \in S_v \wedge|\bX|=\myw}) & \text{if $w = \myw$}\\
1 & \text{otherwise}
\end{cases}
\tag{because $H_2(\cdot) \leq 1$}\\
&\qquad= 1 - \Pr\sqbcond{|\bX|=\myw}{\bX \in S_v} \cdot\p*{1-H_2(\Pr\sqbcond{\bX_i=1}{\bX \in S_v \wedge|\bX|=\myw})}\tag{rearrange and use the fact that probabilities sum to 1}\\
&\qquad\leq 1 - 2^{-O(q)} \p*{1-H_2\p*{1/2 - 2^{-O(q)}}}\tag{by \eqref{eq:significant-prob} and \eqref{eq:away-from-half}}\\
&\qquad= 1 - 2^{-O(q)} \p*{1 - \p*{1-2^{-O(q)}}}\tag{because $H_2(1/2-p) = 1-\Omega(p^2)$}\\
&\qquad= 1 - 2^{-O(q)}.
\end{align*}

\paragraph*{Existence of the weight $\myw$.} Let us now find this magical Hamming weight $\myw$.

First of all, assume that $\Pr[\bX^{\oplus i} \not\in D] \leq 2^{-10q}$ (otherwise, we can replace the right-hand side of \eqref{eq:want} by $1$ and the inequality becomes trivial).
Since $v$ is good, this means that $\Pr\sqbcond{\bX^{\oplus i} \not\in D}{\bX \in S_v} \leq 2 \cdot 2^{-10q} = 2^{-10q+1}$.

When $x \in S_v$, $x^{\oplus i}$ must also be in $S_v$, since flipping variable $i$ does not affect the output of any inner function $g_j$ for $j \in \Jc$.
Informally, this means that for most inputs $x \in D \cap S_v$ (those for which $x^{\oplus i} \in D$), we can pair it up with another input $x^{\oplus i}$ also in $D \cap S_v$.
We can identify each such pair by looking at the value of $x^{(i \mapsto 0)}$ ($x$ with its $i\th$ coordinate set to $0$).
Our argument will rely crucially on this pairing of inputs.

For any integer $-1 \leq w \leq n$, let $p_w \coloneqq \Pr\sqbcond{\bX^{\oplus i} \in D \wedge |\bX^{(i \mapsto 0)}| = w}{\bX \in S_v}$ (note that $p_{-1} = 0$ somewhat vacuously).
It is clear that $p_w = 0$ whenever $w,w+1 \not\in W_v$.
This means that the sequence $\set{p_w}_{w\in\set{-1,\ldots, n}}$ has at most $2|W_v| \leq 2^{q+1}$ nonzero elements.
Moreover $\sum_{w =-1}^n p_w = \Pr\sqbcond{\bX^{\oplus i} \in D}{\bX \in S_v} \geq 1- 2^{-10q+1}$, so we must have $\max_{w=-1}^n p_w \geq (1-2^{-10q+1})/2^{q+1} \geq 2^{-q-2}$.
Now, again using the fact that the sequence $\set{p_w}_{w\in\set{-1,\ldots, n}}$ has at most $2^{q+1}$ nonzero elements, to go from $p_{-1}=0$ to this maximum value of $\geq 2^{-q-2}$, the sequence must contain some ``upwards jump'' of at least $2^{-q-2}/2^{q+1} = 2^{-2q-3}$, so there must exist a weight $\myw \in \set{0,\ldots,n}$ such that $p_\myw \geq p_{\myw-1} + 2^{-2q-3}$.

Let us show that $\myw \in W_v$ and that it satisfies \eqref{eq:significant-prob} and \eqref{eq:away-from-half}.
First, note that the set $\setcond{x \in D \cap S_v}{|x| = \myw}$ includes at least the following two disjoint sets:
\begin{enumerate}[(i)]
    \item the inputs $x \in D \cap S_v$ such that $x^{\oplus i} \in D$, $x_i=0$, and $|x^{(i \mapsto 0)}|=\myw$;
    \item the inputs $x \in D \cap S_v$ such that $x^{\oplus i} \in D$, $x_i=1$ and $|x^{(i \mapsto 0)}|=\myw-1$.
\end{enumerate}
It is easy to see that there are exactly $\frac{p_\myw}{2}|D \cap S_v|$ inputs of type (i) and $\frac{p_{\myw-1}}{2}|D \cap S_v|$ inputs of type (ii).
This means that
\begin{align}
    \Pr\sqbcond{|\bX|=\myw}{\bX \in S_v}
    &\geq \frac{p_\myw + p_{\myw-1}}{2}\label{eq:lb-on-prob-w}\\
    &\geq \frac{p_\myw}{2}\tag{because $p_{\myw-1}$ is a probability}\\
    &\geq \frac{2^{-2q-3}}{2}\tag{because $p_{\myw} \geq p_{\myw-1} + 2^{-2q-3} \geq 2^{-2q-3}$}\\
    &= 2^{-O(q)},\nonumber
\end{align}
satisfying \eqref{eq:significant-prob}.
Also, given that $\Pr\sqbcond{|\bX|=\myw}{\bX \in S_v} \geq 2^{-O(q)} > 0$, there must be some $x \in D \cap S_v$ such that $|x| = \myw$, so $\myw \in W_v$.
Finally,
\begin{align*}
&\Pr\sqbcond{\bX_i=1}{\bX \in S_v \wedge |\bX|=\myw}\\
&\qquad= \frac{\Pr\sqbcond{\bX_i = 1 \wedge |\bX| = \myw}{\bX \in S_v}}{\Pr\sqbcond{|\bX| = \myw}{\bX \in S_v}}\tag{by definition of conditional probability}\\
&\qquad= \frac{\Pr\sqbcond{\bX^{\oplus i} \not\in D \wedge \bX_i = 1 \wedge |\bX| = \myw}{\bX \in S_v} + \Pr\sqbcond{\bX^{\oplus i} \in D \wedge \bX_i = 1 \wedge |\bX| = \myw}{\bX \in S_v}}{\Pr\sqbcond{|\bX| = \myw}{\bX \in S_v}}\tag{split according to $\bX^{\oplus i} \overset{?}{\in} D$}\\
&\qquad\leq \frac{\Pr\sqbcond{\bX^{\oplus i} \not\in D}{\bX \in S_v} + \Pr\sqbcond{\bX^{\oplus i} \in D \wedge \bX_i = 1 \wedge |\bX^{(i \mapsto 0)}| = \myw-1}{\bX \in S_v}}{\Pr\sqbcond{|\bX| = \myw}{\bX \in S_v}}\tag{logical consequences}\\
&\qquad\leq \frac{2^{-10q+1} + \frac{p_{\myw-1}}{2}}{\frac{p_\myw + p_{\myw-1}}{2}}\tag{because there are $\frac{p_{\myw-1}}{2}|D \cap S_v|$ inputs of type (ii), and by \eqref{eq:lb-on-prob-w}}\\
&\qquad= \frac{2^{-10q+2} + p_{\myw-1}}{p_{\myw}+p_{\myw-1}}\\
&\qquad\leq \frac{2^{-10q+2} + (p_\myw - 2^{-2q-3})}{p_\myw + (p_\myw-2^{-2q-3})}\tag{because $p_\myw \geq 2^{-2q-3} > 2^{10q+2}$ and $p_{\myw-1} \leq p_{\myw}-2^{-2q-3}$}\\
&\qquad= \frac{1}{2} - \frac{2^{-2q-3}/2 - 2^{-10q+2}}{2p_\myw-2^{-2q-3}}\\
&\qquad\leq \frac{1}{2} - \frac{2^{-2q-5}}{2p_\myw-2^{-2q-3}}\tag{because $q \geq 1$}\\
&\qquad< \frac{1}{2} - \frac{2^{-2q-5}}{2}\tag{because $p_\myw \leq 1$}\\
&\qquad= \frac{1}{2} - 2^{-O(q)},\tag{because $q \geq 1$}
\end{align*}
thus \eqref{eq:away-from-half} is satisfied, and this concludes the proof.
\end{proof}

\section{Conclusion}\label{sec:conclusion}
In this section we outline several
interesting directions for future work.

\subsection{A plan towards better depth-3 lower bounds}
\label{subsec:depth3-plan}
As we pointed out in the introduction, if it were possible to compute $\Maj_n$ as a fan-in-$O(\sqrt{n})$ function of fan-in-$O(\sqrt{n})$ functions, there would be $2^{O(\sqrt{n})}$-size depth-3 circuits for $\Maj_n$.
By showing that $\CC_{\Theta(\sqrt{n})}(\Maj_n) = \omega(\sqrt{n})$, we ruled out this particular way of obtaining $2^{O(\sqrt{n})}$-size depth-3 circuits for $\Maj_n$.
\Cref{thm:main} is therefore a necessary first step for showing that $\Maj_n$ requires $2^{\omega(\sqrt{n})}$-size depth-3 circuits.

If we keep considering constructions of the form $\Maj_n = h(g_1, \ldots, g_m)$, there are several ways we can make both the outer function $h$ and the inner functions $g_1, \ldots, g_m$ more powerful while still giving depth-3 circuit upper bounds.
For example, one could try to prove that $\Maj_n$ cannot be computed as a fan-in-$O(\sqrt{n})$ function of depth-$O(\sqrt{n})$ decision trees,
or even as a depth-$O(\sqrt{n})$ decision tree of depth-$O(\sqrt{n})$ decision trees.
Indeed, if such a construction were possible, the outer function could be transformed into a size-$2^{O(\sqrt{n})}$ DNF (resp. CNF), while the inner functions and their negations could be transformed into size-$2^{O(\sqrt{n})}$  CNFs (resp. DNFs), which would give a $2^{O(\sqrt{n})}$-size $\Sigma^3$ (resp. $\Pi^3$) circuit for $\Maj_n$.

This motivates the following plan towards proving better depth-3 lower bounds for the majority function:
prove lower bounds against computing $\Maj_n$ as $h(g_1, \ldots, g_m)$, where the inner and outer functions are replaced by increasingly powerful objects.
For now, let us conjecture the following first steps, which could be solved independently.

\begin{conjecture}[making the inner functions more powerful]
\label{conj:inner-power}
$\Maj_n$ cannot be represented as $h(g_1, \ldots, g_m)$, where $m = O(\sqrt{n})$ and each $g_j$ is a depth-$O(\sqrt{n})$ decision tree.
\end{conjecture}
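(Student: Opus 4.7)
The plan is to mirror the proof of \Cref{thm:main}, adapting each ingredient to the decision tree model. The control/buffer variable reduction of \Cref{sec:hw-to-maj} goes through essentially unchanged, since fixing variables to constants only shrinks a depth-$d$ decision tree and preserves the no-free-to-control-queries structure. So it suffices to generalize the information-theoretic lower bound for Hamming weight (\Cref{sec:hw-lb}) and, most critically, the key lemma (\Cref{lemma:main-partial}) to the setting where each $g_j$ is a depth-$d$ decision tree.

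The right complexity measure is no longer the global dependence count but the per-input average: let $q_i \coloneqq \sum_{j \in [m]} \Pr_{\bX}[\,g_j \text{ queries } x_i \text{ along its computation on } \bX\,]$. Since each tree queries at most $d$ variables on each input, $\sum_i q_i \leq md$, so the average $q_i$ is $O(md/n)$; in the regime of the conjecture ($m, d = O(\sqrt{n})$) this is $O(1)$, and by a Markov step (analogous to \Cref{lemma:wlog}, using the self-containing property of Hamming weight) we may restrict attention to a constant fraction of variables for which $q_i = O(1)$. The generalization of the key lemma we aim to prove is
\[
\H\sqbcond{\bX_i}{g_1(\bX), \ldots, g_m(\bX)} \leq 1 + \Pr[\bX^{\oplus i} \not\in D] - 2^{-O(q_i)}.
\]
Plugging this into the two-way mutual-information bound of \Cref{corollary:partial} would yield $m \geq n \cdot 2^{-O(q_i)} = \Omega(n)$, contradicting $m = O(\sqrt{n})$ decisively (and in fact ruling out far stronger upper bounds).

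The main obstacle is proving this generalized key lemma. The original proof of \Cref{lemma:main-partial} crucially uses that there is a fixed global set $J_i$ of inner functions depending on $x_i$, so that conditioning on the outputs of the remaining $m - |J_i|$ functions leaves at most $2^{|J_i|}$ possible values for $|x|$. For depth-$d$ decision trees the global dependence set can be much larger than $q_i$, so this partitioning is too coarse. The natural fix is to instead condition on the random \emph{truncated transcripts} $\widetilde T_j(\bX)$ --- for each tree $g_j$, the prefix of its root-to-leaf path on $\bX$ cut off just before the first query of $x_i$, or the full transcript if $x_i$ is never queried on $\bX$. These are by construction independent of $\bX_i$, and the random active set $\widetilde J_i(\bX) \coloneqq \{j : g_j \text{ queries } x_i \text{ on } \bX\}$ has expected size exactly $q_i$. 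Given all truncated transcripts, $g_j(\bX)$ is already determined for $j \not\in \widetilde J_i$ and can vary only for $j \in \widetilde J_i$, so the set of Hamming weights consistent with the transcripts has size at most $2^{|\widetilde J_i|}$, which concentrates around $2^{O(q_i)}$. The technical content is to feed this ``random $J_i$'' into the skeleton of the original proof --- the Markov selection of a good transcript and the upward-jump argument yielding the magical weight $\myw$ --- so as to still extract a $2^{-O(q_i)}$ advantage on $\bX_i$, while averaging cleanly over the distribution of $\widetilde J_i(\bX)$.

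Once this generalized key lemma is in hand, the rest of the argument (the reduction from majority to a partial Hamming weight function, the choice of $\Ifree, \Ibuffer, \Icontrol$, and the final asymptotic calculation) should proceed with only cosmetic modifications, since restrictions of decision trees remain decision trees and the control variables merely fix variables to constants. The entire difficulty is concentrated in the information-theoretic step above.
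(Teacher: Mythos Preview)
First, note that this statement is listed in the paper as an \emph{open conjecture} (\Cref{conj:inner-power}); the paper does not prove it. So there is no ``paper's proof'' to compare against --- your proposal must stand on its own.

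There is a genuine gap in the information-theoretic step. In the paper's proof of \Cref{lemma:main-partial}, the crucial first move is the data-processing inequality
\[
\H\sqbcond{\bX_i}{g_1(\bX),\ldots,g_m(\bX)} \;\leq\; \H\sqbcond{\bX_i}{\{g_j(\bX)\}_{j\in\Jc},\,|\bX|},
\]
which is valid precisely because the pair $(\{g_j\}_{j\in\Jc},\,|\bX|)$ is a \emph{function of} $(g_1,\ldots,g_m)$: the first component is a sub-tuple and $|\bX|=h(g_1,\ldots,g_m)$. Your truncated transcripts $\widetilde T_j(\bX)$ are \emph{not} a function of the outputs $(g_1(\bX),\ldots,g_m(\bX))$ --- they encode the internal computation paths, which are extra information. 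Hence you cannot write $\H\sqbcond{\bX_i}{g_1,\ldots,g_m}\le \H\sqbcond{\bX_i}{\widetilde T,\,|\bX|}$; the inequality goes the wrong way. The fact that $\widetilde T$ is independent of $\bX_i$ is irrelevant here --- so is $\{g_j\}_{j\in\Jc}$ in the original proof; what matters is that the coarser conditioning is \emph{determined by} the finer one.

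The natural repair is to carry the transcripts (or just the random set $\widetilde J_i$) as side information on both sides of the two-way bound in \Cref{corollary}. But then the upper bound becomes $\I[\bX:g,\widetilde T]\le \H[g]+\H[\widetilde T]\le m+md$, and in the regime $m,d=\Theta(\sqrt{n})$ this is $\Theta(n)$, which is exactly the trivial bound --- the argument loses all its force. Using $\widetilde J_i$ instead of the full transcript does not obviously help either, since $\widetilde J_i$ depends on $i$, and combining the per-$i$ conditional mutual informations $\I[\bX_i:g\mid\widetilde J_i]$ into a single lower bound on $\I[\bX:g]$ incurs correction terms $\I[\bX_i:\widetilde J_i\mid g]$ that you have no control over. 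This is not a technicality you can ``average cleanly over''; it is the heart of why the conjecture is open.

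A secondary concern: you assert that the control/buffer reduction of \Cref{sec:hw-to-maj} ``goes through essentially unchanged''. That reduction uses that each $g_j$ has a \emph{fixed} set of at most $k$ queried variables, so that the set $I'$ of variables co-queried with $\Icontrol$ has size at most $|\Icontrol|\cdot q\cdot k$. A depth-$d$ decision tree can depend on up to $2^d$ variables globally, so the analogous bound on $|I'|$ explodes. You would again need an ``expected'' or per-input version of the buffer argument, and it is not clear this preserves the property that no inner function touches both $\Ifree$ and $\Icontrol$ on the relevant inputs.
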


\begin{conjecture}[making the outer function more powerful]
\label{conj:outer-power}
$\Maj_n$ cannot be represented as $h(g_1, \ldots, g_m)$, where $h$ is a depth-$O(\sqrt{n})$ decision tree and each $g_j$ depends on only $O(\sqrt{n})$ variables.
\end{conjecture}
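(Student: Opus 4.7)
The plan is to lift the information-theoretic framework of \Cref{sec:hw-lb,sec:hw-to-maj,sec:proof-main} from arbitrary outer combiners to the shallow-decision-tree setting. As a first step, the Hamming-weight reduction of \Cref{sec:hw-to-maj} should carry over essentially verbatim: the buffer/control partition never peeks inside the outer function, so if $\Maj_n$ is computed in this model then one still obtains a composition computing $\HW_{n'}$ on a large sub-domain $D \subseteq \zo^{\Ifree}$ for $n' = \Omega(n)$, with $h$ still a depth-$O(\sqrt{n})$ decision tree and each $g_j$ still $O(\sqrt{n})$-local.

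The core new ingredient is a decision-tree version of \Cref{lemma:main-partial}. For each $x$, define the \emph{trace} $\tau(x)$ to be the $O(\sqrt{n})$-bit transcript of the root-to-leaf path of $h$ on input $x$, recording which inner functions are queried along the way and what they return. Since $\tau(x)$ determines $h(g_1(x),\ldots,g_m(x))$, it determines $|x|$ whenever $x \in D$. I would attempt to rerun the proof of \Cref{lemma:main-partial} with $\tau(\bX)$ in place of $(g_1(\bX),\ldots,g_m(\bX))$: partition $D$ by the value of $\tau$, and inside each block run the ``upward jump'' argument on the probabilities $p_w = \Pr[\bX^{\oplus i}\in D \wedge |\bX^{(i\mapsto 0)}|=w \mid \bX \in S_\tau]$ to exhibit a weight $\myw$ witnessing a bias in $\bX_i$. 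The counting step would then bound $\I[\bX:\tau(\bX)]$ from above by $\H[\tau(\bX)]\leq O(\sqrt{n})$ rather than by $m$ as in \Cref{corollary}, and combining with a per-coordinate lower bound $\I[\bX_i:\tau(\bX)] \geq 2^{-O(q)}$ would force $n\cdot 2^{-O(q)}\leq O(\sqrt{n})$, giving $q=\Omega(\log n)$ and contradicting $q\leq O(\sqrt{n})$.

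The hard step is the decision-tree key lemma itself. In \Cref{sec:proof-main} the set $J_i$ of inner functions querying $x_i$ is fixed, and conditioning on $\{g_j(\bX)\}_{j\in\Jc}$ cleanly isolates what is left to learn about $\bX_i$ into the $q$ outputs in $J_i$. With an adaptive outer tree, the inner functions appearing in $\tau(x)$ depend on $x$: for many inputs $x$, \emph{none} of the $q$ inner functions querying $x_i$ may lie on $x$'s root-to-leaf path, in which case $\tau(x)$ really is independent of $\bX_i$ and the analogue of \eqref{eq:away-from-half} simply fails on that block. Overcoming this will likely require either an averaging argument that pins down a popular prefix of the tree along which enough $x_i$-relevant inner functions are forced to be queried -- effectively derandomizing the adaptive tree into an oblivious query sequence on a large fraction of inputs -- or a more global information-theoretic bound that weights each inner function by how often it appears in $\tau(\bX)$ over the input distribution. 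I expect this to be where the bulk of the new work lies, and also where the techniques needed to push beyond the $2^{\Omega(\sqrt{n})}$ depth-3 barrier would have to enter.
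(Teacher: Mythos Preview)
This statement is \Cref{conj:outer-power}, which the paper explicitly lists as an \emph{open conjecture} in \Cref{subsec:depth3-plan}; the paper contains no proof of it, so there is nothing to compare your attempt against. Your write-up is a research plan toward an open problem, and the paper itself says only that resolving it would be a natural next step toward \Cref{op:depth3lbs}.

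As a plan, there is a concrete gap in the concluding step even if one grants the hard decision-tree key lemma. You derive $n\cdot 2^{-O(q)} \leq \H[\tau(\bX)] \leq O(\sqrt{n})$, conclude $q = \Omega(\log n)$, and then claim this ``contradict[s] $q \leq O(\sqrt{n})$.'' But $\Omega(\log n)$ and $O(\sqrt{n})$ are perfectly compatible, so nothing is contradicted. In the paper's argument for \Cref{thm:hamming-weight}, the inequality $m \geq n \cdot 2^{-O(q)}$ is paired with the structural identity $q = mk/n$ coming from \Cref{cor:wlog}, and it is the \emph{tension between the two} that forces $m$ large. In your adaptive setting $m$ is unbounded, so the total number of inner functions touching $x_i$ carries no useful upper bound, and you never say what plays the role of that second constraint on $q$. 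A natural candidate is the \emph{expected} number of $x_i$-querying inner functions appearing along $\tau(\bX)$, which does average to $O(1)$ over $i$ since any root-to-leaf path touches at most $O(\sqrt{n})\cdot O(\sqrt{n}) = O(n)$ variable-occurrences in total; but establishing a bound of the form $\I[\bX_i : \tau(\bX)] \geq 2^{-O(q_i)}$ for that averaged $q_i$ runs straight into the adaptive obstacle you already flagged, and is precisely what makes the conjecture open.
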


In fact, we do not have to climb very far up the power ladder in order to reach the full power of depth-3 circuits.
Indeed, if we were to replace ``depth-$O(\sqrt{n})$ decision tree'' by ``size-$2^{O(\sqrt{n})}$ DNF'' (resp. CNF) in \Cref{conj:outer-power}, this would already be equivalent to the conjecture that $\Maj_n$ cannot be represented by $\Sigma^3$ (resp. $\Pi^3$) circuits of size $2^{O(\sqrt{n})}$.\footnote{In fact, this would still be true even if we replaced ``each $g_j$ depends only on $O(\sqrt{n})$ variables'' by ``each $g_j$ is an OR of $O(\sqrt{n})$ variables''.}

\subsection{Further applications of our techniques}

\paragraph*{An information theoretic toolbox for circuit lower bounds?} In this paper, we showed (in \Cref{lemma:main} and \Cref{lemma:main-partial}) how we can trace the information flow within a circuit of arbitrary functions to prove tight lower bounds. Could one prove lower bounds against boolean circuits via the same technique? In particular, it would be interesting to prove a statement analogous to \Cref{lemma:main}, but quantifying the mutual information between internal nodes of a boolean circuit. 

One challenge of accomplishing this is in the difference in power between the inner functions and the AND/OR gates of boolean circuits. Since the boolean gates are so much weaker than the inner functions (which compute arbitrary boolean functions), we would need to prove much stronger statements about the information flow in order to get tight lower bounds. Nonetheless, we believe that challenges like \Cref{op:depth3lbs} require techniques that can precisely identify the weaknesses of small-depth circuits, and that information theory is a well-suited tool for this task.

\paragraph*{The usefulness of multi-output functions.} 
Multi-output functions played a key role in two steps of the proof. These steps can be distilled as a general plan for proving lower bounds for single-output functions as follows:

\begin{enumerate}[(1)]
    \item Prove a lower bound for a multi-output function (in our case, $\HW_n$). 
    \item Show that the act of computing a desired single-output function (in our case, $\Maj_n$) essentially entails computing the multi-output function (perhaps on a smaller set of inputs).
\end{enumerate}

We believe that this general approach could be applied to a much wider range of models. For example, consider the problem of improving the best lower bounds for depth-3 circuits. If one could prove a $2^{\omega(\sqrt{n})}$ lower bound against depth-3 circuits of $\HW_n$, then this would \emph{automatically} give us better lower bounds for an explicit function (we would get a lower bound of $2^{\omega(\sqrt{n})}/\log n = 2^{\omega(\sqrt{n})}$ for at least one of the output bits of $\HW_n$). In all likelihood, this could be extended (perhaps by the approach in step (2)) to work for $\Maj_n$, thereby resolving \Cref{op:depth3lbs}. 

We usually think of single-output functions as being components of multi-output functions, but this work shows that multi-output functions can be hidden within single-output functions as well. We hope that this will encourage the incorporation of multi-output functions into the lexicon of hard functions to prove lower bounds for.

\paragraph*{Random linear codes.} As a concrete example of functions that one might prove strong lower bounds for using our techniques, we propose random linear error-correcting codes.

In a celebrated work, Paturi et al.~\cite{PPSZ05} showed that error-correcting codes\footnote{More precisely, functions computing whether their input $x$ belongs to a fixed error-correcting code.} with large enough distance and enough codewords require size-$2^{1.282 \sqrt{n}}$-size $\Sigma^3$ circuits, which remains to date the strongest lower bound proved for any explicit function.
A natural question is, was this technique tight?
Can we obtain $2^{\omega(\sqrt{n})}$ lower bounds for all ``good enough'' error-correcting codes?

Unfortunately, the answer is at least partially negative.
Leffman, Pudlák, and Savický \cite{LPS97} showed that there exist linear codes of large distance which have \emph{sparse} parity check matrices.
More precisely, they show the existence of a linear code $f$ of distance $n^{\Omega(1)}$ whose parity check matrix has $\sqrt{n}$ rows, each of which has only $O(\sqrt{n})$ non-zero entries.
This means that we can write $f=h(g_1, \ldots, g_{\sqrt{n}})$, where each of the inner functions $g_j$ computes a parity check over $O(\sqrt{n})$ variables, and $h$ is the AND function.
Using the notation of this paper, $\CC_{O(\sqrt{n})}(f) \leq \sqrt{n}$, and thus $f$ has size-$2^{O(\sqrt{n})}$ $\Sigma^3$ circuits.

However, this code has small depth-3 circuits precisely because its parity check matrix is sparse.
It is easy to see by a counting argument that most linear codes do \emph{not} have sparse parity check matrices.
So it is natural to wonder whether a random linear code (a code whose parity check matrix is chosen uniformly at random) might require large $\Sigma^3$ circuits.
Concretely, we conjecture the following.

\begin{conjecture}
\label{conj:random-codes}
Let $\ell \ll k \ll n$. Let $H \sim \zo^{\ell \times n}$ be a random matrix of $\ell$ parity checks, and let
\[
f: \zo \to \zo^n : x \mapsto
\begin{cases}
1 & \text{if $Hx = \Vec{0}$}\\
0 & \text{otherwise}
\end{cases}
\]
be the corresponding linear code. Then, with high probability,
\begin{enumerate}[(i)]
    \item $\CC_k(f) \geq \Omega\p*{\frac{n}{k} \cdot \frac{\ell}{\log(n/k)}}$;
    \item $f$ requires $\Sigma^3$ circuits of size $2^{\Omega(\sqrt{n\ell / \log n})}$.
\end{enumerate}
\end{conjecture}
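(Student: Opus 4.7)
The conjecture has two parts. Part~(i) appears tractable by adapting the techniques of this paper; part~(ii) is strictly stronger---it rules out \emph{all} depth-3 circuits, not just those arising from the construction in~\eqref{eq:cc-to-depth3}---so part~(i) would only give a necessary condition for part~(ii), analogously to how \Cref{thm:main} is a first step toward \Cref{op:depth3lbs} rather than a proof of it. I therefore focus my plan on part~(i).

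\paragraph{Part~(i), two-step plan.}
I would mirror the approach used in this paper for $\Maj_n$. \textbf{Step A:} prove a lower bound of $\Omega(n\ell/(k\log(n/k)))$ on the composition complexity of the multi-output linear sketch $L_H(x) = Hx$, playing the role of $\HW_n$. \textbf{Step B:} reduce $f$ to a partial version of $L_H$ via a control-variable argument analogous to \Cref{sec:hw-to-maj}, by augmenting the input with $\ell$ control bits $c$ and considering $f'(x,c) = [Hx = c]$, which can be simulated from the original composition by adding $\ell$ trivial inner functions that output the $c_i$; varying $c$ over a carefully chosen set of $2^{O(q)}$ syndromes should then let one recover $Hx$ on a large subset of inputs, at which point a partial-functions version of step~A closes the argument.

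\paragraph{Step~A, key lemma.}
The natural analogue of \Cref{lemma:main} would assert: if variable $i$ is queried by only $q$ inner functions computing $L_H$, then $g_1(\bX),\ldots,g_m(\bX)$ reveal $\Omega(\ell/\log(n/k)) \cdot 2^{-O(q)}$ bits of mutual information about $\bX_i$; summing over $i$ as in \Cref{corollary} and comparing with $I(\bX:G(\bX)) \leq m$ should yield the desired bound. Following the template of \Cref{sec:proof-main}, fixing the outputs of the $m - q$ inner functions not querying $i$ confines $H\bX$ to a set $T \subseteq \{0,1\}^\ell$ of at most $2^q$ values, while the bit flip $\bX \mapsto \bX^{\oplus i}$ translates $H\bX$ by the random column $H_{:,i}$; for random $H$ this translation should be generic enough that $|T \cap (T + H_{:,i})|$ is typically small, so most bit-flip pairs cross the fiber boundary and bias $\bX_i$ in the conditional distribution.

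\paragraph{Main obstacle.}
The technical crux is the multi-output key lemma. The Hamming weight proof exploits the totally ordered structure of $\{0, \ldots, n\}$ and the fact that a single bit flip changes $|x|$ by exactly $\pm 1$; for $L_H$ the output space is the vector space $\{0,1\}^\ell$ and the flip action is translation by a random column vector. Extracting the correct $\ell/\log(n/k)$ factor (rather than a trivial $\Omega(1)$ bias per variable) will likely require combining the information-theoretic machinery of \Cref{sec:proof-main} with concentration properties of random $H$, such as the near-linear-independence of small column subsets and the anti-concentration of random translations acting on small subsets of $\{0,1\}^\ell$. Part~(ii) would additionally require either a separate argument (e.g., extending the top-down methods of~\cite{HJP93,PPZ97,PPSZ05}) or a strengthening of the composition-complexity framework to capture depth-3 circuits beyond the canonical construction.
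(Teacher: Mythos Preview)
The statement you are attempting to prove is \Cref{conj:random-codes}, which the paper explicitly labels as a \emph{conjecture}; the paper does not prove it. It is presented in \Cref{sec:conclusion} as an open problem, accompanied only by heuristic justification for the proposed bounds (the $\Omega(\frac{n}{k}\cdot\frac{\ell}{\log(n/k)})$ comes from analyzing a specific upper-bound construction, and the $\Sigma^3$ bound from balancing $\CC_k(f)+k$). So there is no ``paper's own proof'' to compare your attempt against.

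That said, your proposed plan tracks closely the approach the paper itself sketches as promising. The paper writes that one could ``first prove a lower bound for $\vec{f}$ [i.e.\ $x\mapsto Hx$] using techniques similar to the ones in \Cref{sec:hw-lb}, then extend it to $f$ using our framework for bootstrapping lower bounds from multi-output functions to binary-output functions, which we presented in \Cref{sec:hw-to-maj}.'' Your Step~A and Step~B are exactly this two-step program, and you correctly identify that part~(ii) lies beyond what composition-complexity bounds alone can deliver.

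Your diagnosis of the main obstacle is also accurate and goes slightly beyond what the paper says: the proof of \Cref{lemma:main-partial} genuinely relies on the total order of $\{0,\ldots,n\}$ to find a ``jump'' weight $w^*$, and there is no obvious analogue when the output space is $\{0,1\}^\ell$ with the translation action $Hx \mapsto Hx + H_{:,i}$. Two cautions on your plan: first, your Step~B adds $\ell$ fresh control bits rather than repurposing existing input variables as in \Cref{sec:hw-to-maj}, which changes the function being analyzed---you would need to argue that a composition for $f$ yields one for this augmented $f'$ without blowup, or instead mimic the paper's buffer-variable trick using columns of $H$ directly. Second, the target mutual-information bound $\Omega(\ell/\log(n/k))\cdot 2^{-O(q)}$ per variable in Step~A exceeds $1$ bit when $\ell$ is large, which is impossible for a single binary coordinate $\bX_i$; the $\ell$ factor must therefore enter the argument differently (e.g.\ through a stronger global counting rather than a per-variable bound), and pinning down where it comes from is likely the real heart of the problem.
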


The proposed bound $\Omega(\frac{n}{k} \cdot \frac{\ell}{\log(n/k)})$ in point (i) comes from the observation that one can verify any $\log(n/k)$ parity check using $O(n/k)$ inner functions: first split the variables into $O(n/k)$ sets according to their coefficients in each of the $\log(n/k)$ parity check, then compute the parity of each of those $O(n/k)$ sets.
Point (i) simply conjectures that this observation gives the best upper bound on $\CC_k(f)$.
The proposed bound $2^{\Omega(\sqrt{n\ell / \log n})}$ in point (ii) is obtained by conjecturing that the best $\Sigma^3$ circuit size for $f$ will be obtained as $2^{O(\CC_k(f) + k)}$ through the connection with composition complexity (see \Cref{sssec:link-to-small-depth}), then balancing the sum by setting $k \coloneqq \sqrt{n \ell / \log n}$.

If \Cref{conj:random-codes} is true, it would give lower bounds of the form $2^{n^{1-\eps}}$ against $\Sigma^3$ circuits.
It seems likely to us that, depending on how the proof works, the randomness could then be lifted, and it could be extended to some explicit \emph{pseudorandom} linear code.

We think the techniques in our paper would be particularly well-suited to proving \Cref{conj:random-codes}.
Indeed, just like majority has a natural multi-output analog (the Hamming weight function), $f$ has an even more obvious corresponding multi-output function: the function $\Vec{f}(x) \coloneqq Hx$ which gives the outputs of all the parity checks.
Therefore, it seems plausible that one could first prove a lower bound for $\Vec{f}$ using techniques similar to the ones in \Cref{sec:hw-lb}, then extend it to $f$ using our framework for bootstrapping lower bounds from multi-output functions to binary-output functions, which we presented in \Cref{sec:hw-to-maj}.

\section*{Acknowledgements}

Li-Yang thanks Xi Chen, Rocco Servedio, and Erik Waingarten for numerous discussions about this problem.  

Victor, Pras, and Li-Yang are supported by NSF CAREER Award CCF-1942123.  Pras is also supported by Moses Charikar's Simons Investigator Award.

\bibliography{pras}
\bibliographystyle{alpha}

\end{document}